\newcommand{\N}{{\rm I}\!{\rm N}}
\newcommand{\R}{{\rm I}\!{\rm R}}
\newcommand{\zug}[1]{\langle#1\rangle}
\newcommand{\stam}[1]{}
\newcommand{\ctls}{CTL$^\star$\xspace}
\newcommand{\lfl}{LFL\xspace}
\newcommand{\bfls}{BFL$^\star$\xspace}
\newcommand{\bflso}{BFL$^\star_1$\xspace}
\newcommand{\bflsok}{BFL$^\star_{1,k}$\xspace}
\newcommand{\bfl}{BFL\xspace}
\newcommand{\ebflso}{$\exists$BFL$^\star_1$\xspace}
\newcommand{\abflso}{$\forall$BFL$^\star_1$\xspace}
\newcommand{\lflo}{LFL$_1$\xspace}
\newcommand{\elflo}{$\exists$LFL$_1$\xspace}
\newcommand{\alflo}{$\forall$LFL$_1$\xspace}
\newcommand{\cbfls}{CBFL$^\star$\xspace}
\newcommand{\ecbflso}{$\exists$CBFL$^\star_1$\xspace}
\newcommand{\acbflso}{$\forall$CBFL$^\star_1$\xspace}
\newcommand{\cbflsi}{CBFL$^\star_i$\xspace}
\newcommand{\cbflsipo}{CBFL$^\star_{i+1}$\xspace}
\newcommand{\ecbflsipo}{$\exists$CBFL$^\star_{i+1}$\xspace}
\newcommand{\ecbflsi}{$\exists$CBFL$^\star_i$\xspace}
\newcommand{\acbflsi}{$\forall$CBFL$^\star_i$\xspace}
\newcommand{\cbflsz}{CBFL$^\star_0$\xspace}
\newcommand{\cbflso}{CBFL$^\star_1$\xspace}
\newcommand{\true}{{\bf true}}
\newcommand{\false}{{\bf false}}
\newcommand{\Af}{{\mathcal{A}}}
\newcommand{\Ef}{{\mathcal{E}}}
\newcommand{\lopen}[1]{(#1]} 
\newcommand{\ropen}[1]{[#1)} 
\keywords{Flow Network, Temporal Logic}
\begin{document}

\title[Flow Logic]{Flow Logic}

\author[O. Kupferman]{Orna Kupferman}
\address{School of Computer Science and Engineering, The Hebrew University, Israel}
\thanks{This research has received funding from the European Research Council under the EU's 7-th Framework Programme (FP7/2007--2013) / ERC grant agreement no 278410.}

\author[G. Vardi]{Gal Vardi}
\email{gal.vardi@mail.huji.ac.il}

\begin{abstract}
A flow network is a directed graph in which each edge has a capacity, bounding the amount of flow that can travel  through it. Flow networks have attracted a lot of research in computer science. Indeed, many questions in numerous application areas can be reduced to questions about flow networks. This includes direct applications, namely a search for a maximal flow in networks, as well as less direct applications, like maximal matching or optimal scheduling. Many of these applications would benefit from a framework in which one can  formally reason about properties of flow networks that go beyond their maximal flow.

We introduce \emph{Flow Logics}: modal logics that treat flow functions as explicit first-order objects and enable the specification of rich properties of flow networks. The syntax of our logic \bfls (Branching Flow Logic) is similar to the syntax of the temporal logic \ctls, except that atomic assertions may be \emph{flow propositions}, like $> \gamma$ or $\geq \gamma$, for $\gamma \in \N$, which refer to the value of the flow in a vertex, and that first-order quantification can be applied both to paths and to flow functions. For example, the \bfls formula $\Ef ((\geq 100) \wedge AG({\mathit{low}} \rightarrow (\leq 20)))$ states that there is a legal flow function in which the flow is above $100$ and in all paths, the amount of flow that travels through vertices with low security is at most $20$.

We present an exhaustive study of the theoretical and practical aspects of \bfls, as well as extensions and fragments of it. Our extensions include flow quantifications that range over non-integral flow functions or over maximal flow functions, path quantification that ranges over paths along which non-zero flow travels, past operators, and first-order quantification of flow values. We focus on the model-checking problem and show that it is PSPACE-complete, as it is for \ctls. Handling of flow quantifiers, however, increases the complexity in terms of the network to ${\rm P}^{\rm NP}$, even for the LFL and BFL fragments, which are the flow-counterparts of LTL and CTL\@. We are still able to point to a useful fragment of \bfls for which the model-checking problem can be solved in polynomial time.
Finally, we introduce and study the query-checking problem for \bfls, where under-specified \bfls formulas are used for network exploration.
\end{abstract}

\maketitle

\section{Introduction}%
\label{intro}
A \emph{flow network} is a directed graph in which each edge has a
capacity, bounding the amount of flow that can travel through it. The
amount of flow that enters a vertex equals the amount of flow that
leaves it, unless the vertex is a \emph{source}, which has only
outgoing flow, or a \emph{target}, which has only incoming flow.
The fundamental \emph{maximum-flow problem} gets as input a flow network and searches for a maximal flow
from the source to the target~\cite{CLR90,GTT89}.
The problem was first formulated and solved in the 1950's~\cite{FF56,FF62}. It has attracted much research on improved algorithms~\cite{EC72,Din70,GT88,Mad16} and applications~\cite{AMO93}.

The maximum-flow problem can be applied in many settings in which something
travels along a network. This covers numerous application domains,
including traffic in road or rail systems, fluids in pipes, currents
in an electrical circuit, packets in a communication network, and
many more~\cite{AMO93}.
Less obvious applications involve flow networks
that are constructed in order to model settings with an abstract network,
as in the case of scheduling with constraints~\cite{AMO93} or elimination in partially completed tournaments~\cite{Sch66}.
In addition, several classical graph-theory problems can be reduced to the
maximum-flow problem.  This includes the problem of finding a maximum
bipartite matching, minimum path cover, maximum edge-disjoint or
vertex-disjoint path, and many more~\cite{CLR90,AMO93}. Variants of the
maximum-flow problem can accommodate further settings, like circulation
problems, where there are no sink and target vertices, yet there is a
lower bound on the flow that needs to be traversed along each edge~\cite{Tar85}, networks with multiple source and target vertices, networks
with costs for unit flows, networks with multiple commodities, and more~\cite{EIS76}.

All the above applications reduce the problem at hand to the problem of finding a maximal flow in a network. Often, however, one would like to reason about properties of flow networks that go beyond their maximal flow. This is especially true when the vertices or edges of the network attain information to which the properties can refer. For example, the vertices of a network may be labeled by their security level, and we may want to check whether all legal flow functions are such that the flow in every low-security vertex is at most $20$, or check whether there is a flow function in which more than $100$ units of flow reach the target and still the flow in every low-security vertex is at most $20$. As another example, assume that each vertex in the network is labeled by the service provider that owns it, and we want to find a maximal flow under the constraint that flow travels through vertices owned by at most two providers.

The challenge of reasoning about properties of systems has been extensively studied in the context of formal verification. In particular, in \emph{temporal-logic model checking}~\cite{CE81,QS82}, we check whether a system has a desired property by translating the system into a labeled state-transition graph, translating the property into a temporal-logic formula, and deciding whether the graph satisfies the formula.
Model checking is one of the notable success stories of theoretical computer science, with exciting theoretical research that is being transformed into industrial applications~\cite{CGP99,CHVB18}.
By viewing networks as labeled state-transition graphs, we can use existing model-checking algorithms and tools in order to reason about the structural properties of networks. We can check, for example, that every path from the source to the target eventually visits a \textit{check-sum} vertex. Most interesting properties of flow networks, however, refer to flows and their values, and not just to the structural properties of the network. Traditional temporal logics do not support the specification and verification of such properties.

We introduce and study \emph{Flow Logics}: modal logics that treat flow functions as explicit first-order objects and enable the specification of rich properties of flow networks. The syntax of our logic \bfls (Branching Flow Logic) is similar to the syntax of the temporal logic \ctls, except that atomic assertions are built from both atomic propositions and \emph{flow propositions}, like $> \gamma$ or $\geq \gamma$, for $\gamma \in \N$, which refer to the value of the flow in a vertex, and that first-order quantification can be applied both to paths and to flow functions. Thus, in addition to the path quantifiers $A$ (``for all paths'') and $E$ (``there exists a path'') that range over paths, states formulas may contain the flow  quantifiers $\Af$ (``for all flow functions'') and $\Ef$ (``there exists a flow function''). For example, the \bfls formula $\Ef ((\geq 100) \wedge AG(\mathit{low} \rightarrow (\leq 20)))$ states the property discussed above, namely that there is a flow function in which the value of the flow is at least $100$, and in all paths, the value of flow
that travels through
in vertices with low security is at most~$20$.

We study the theoretical aspects of \bfls as well as extensions and fragments of it. We demonstrate their applications in reasoning about flow networks, and we examine the complexity of their model-checking problem.
Below we briefly survey  our results. (1) We show that while maximal flow can always be achieved by integral flows~\cite{FF56}, in the richer setting of flow logic, restricting attention to integral flows may change the satisfaction value of formulas. Accordingly, our semantics for \bfls considers two types of flow quantification: one over integral flows and another over \emph{non-integral} ones. (2) We prove that bisimulation~\cite{Mil71} is not a suitable equivalence relation for flow logics, which are sensitive to unwinding. We relate this to the usefulness of \emph{past operators} in flow logic, and we study additional aspects of the expressive power of \bfls. (3) We consider extensions of \bfls by path quantifiers that range over paths on which flow travels (rather than over all paths in the network), and by \emph{first-order quantification on flow values}. (4) We study the model-checking complexity of \bfls, its extensions, and some natural fragments. We show that algorithms for temporal-logic model-checking can be extended to handle flow logics, and that the complexity of the \bfls model-checking problem is PSPACE-complete. We study also the \emph{network complexity} of the problem, namely the complexity in terms of the network, assuming that the formula is fixed~\cite{LP85,KVW00}, and point to a fragment of \bfls for which the model-checking problem can be solved in polynomial time.

\stam{
Below we survey briefly our results.

The traditional definition of flow considers \emph{integral flow functions}: each edge is assigned a value in $\N$. Integral-flow functions arise naturally in settings in which the objects we transfer along the network cannot be partitioned into fractions, as is the case with cars, packets, and more.
Sometimes, as in the case of liquids, flow can be partitioned arbitrarily. It is well-known, however, that maximum flow can be achieved by integral flows~\cite{FF56}, thus algorithms consider integral flow even in settings in which flow can be partitioned arbitrarily.
We show that, interestingly, in the richer setting of flow logic, restricting attention to integral flows may change the satisfaction value of formulas. Intuitively, it follows from the ability of a formula to specify properties that require unit flows to be partitioned.
Accordingly, our semantics for \bfls considers two types of flow quantification: one over integral flows and the second over non-integral ones.

Traditional branching temporal logics are insensitive to unwinding, in the sense that unwinding a system into a tree does not affect the satisfaction of \ctls formulas in it.
Indeed, a graph and its unwinding are \emph{bisimilar}~\cite{Mil71}. We prove that bisimulation is not a suitable equivalence relation for flow logics, which are sensitive to unwinding.
One implication of this is that tree automata, which offer an effective framework for reasoning about branching temporal logics, cannot be easily used for reasoning about branching flow logics. Another implication is the usefulness of \emph{past operators} in flow logic. We extend \bfls also with such operators and study additional aspects of the expressive power of \bfls. In particular, we show that while the syntax of the linear fragment of \bfls, namely LFL (Linear Flow Logic), is similar to that of the linear temporal logic LTL, its semantics mixes the linear and branching views. Indeed, while LFL formulas describe paths in the network, the flow quantifiers make the context of the network important, as the quantified flow functions
refer to the network as a whole.

The flow propositions in \bfls include constants. This makes it impossible to relate the flow in different vertices other than specifying all possible constants that satisfy the relation. Another extension for \bfls that we consider is by \emph{first-order quantification on flow values}. We also allow the logic to apply arithmetic operations on the values of quantified flow variables. For example, the formula
$\Ef AG \forall x (x \rightarrow EX (\geq x \ {\rm div}\   2))$,
states that there is a flow in which all vertices have a successor that has at least half of their flow.

Other extensions we consider allow path quantifiers $A^+$ and $E^+$ that range over paths on which flow travels (rather than over all paths in the network), and allow flow quantifiers $\Af^{\mathit{max}}$ and $\Ef^{\mathit{max}}$ that range over flow functions that attain the maximal flow. With such quantifiers we can express, for example, the property $\Ef^{\mathit{max}} (AX(A^+Xa \vee  A^+Xb))$, stating that a maximal flow may be attained even if all the successors of the source direct their incoming flow only to vertices labeled $a$ or only to vertices labeled $b$.
As demonstrated in Examples~\ref{tw example} and~\ref{tw example2}, such properties are useful in restricting matching and scheduling solutions in various settings.

We turn to examine the complexity of \bfls, its extensions, and its fragments. We show that algorithms for temporal-logic model-checking can be extended to handle flow logics, and that the complexity the \bfls model-checking  problems is PSPACE-complete.
We show that PSPACE-hardness holds already for LFL formulas with no atomic propositions, namely when the specification only refers to the values of the flow along a path in the network.
In practice, a network is typically much bigger than its specification, and its size is the computational bottleneck. In temporal-logic model checking, researchers have analyzed the \emph{system complexity} of model-checking algorithms, namely the complexity in terms of the system, assuming the specification is of a fixed length. There, the system complexity of LTL and \ctls model checking is NLOGSPACE-complete~\cite{LP85,KVW00}. We prove that, unfortunately, this is not the case for \bfls.  That is, we prove that while the \emph{network complexity} of the model-checking problem, namely the complexity in terms of the network, does not reach PSPACE, it seems to require polynomially many calls to an NP oracle. Essentially, each evaluation of a flow quantifier requires such a call, which increases the network complexity to $\Delta_2^P$.\footnote{The complexity class $\Delta_2^P$ includes all problems that can be solved by a deterministic polynomial-time Turing machine that has an oracle to a nondeterministic polynomial-time Turing machine, \textit{a.k.a} ${\rm P}^{\rm NP}$.} We show that NP-hardness applies already for BFL --- the flow-counterpart of CTL, for LFL, and for networks that are not labeled.

We extend the algorithms to handle the various extensions of \bfls, and we show that they do not increase the complexity. The techniques for handling the extensions are, however, richer.
For example, handling non-integral flow quantification, we have to reduce the model-checking problem to a solution of a linear-programming system, and for first-order quantification of flow values, we have to first bound the range of relevant values of flow.
We also consider
the \emph{flow-synthesis} problem. There, the checked formula contains a single existential flow quantifier and one has to return a flow function with which the formula is satisfied. We show that the complexities of the model-checking and the flow-synthesis problems coincide.

Since the network complexity is the computational bottleneck in model checking, the NP dependency in the size of the network motivates a definition of a feasible fragment of \bfls.
Our fragment, \emph{conjunctive-\bfls}
(\cbfls, for short), contains \bfls formulas in which quantified flow functions are restricted in a conjunctive way. That is, when we ``prune'' a \cbfls formula into requirements on the network, flow propositions are only conjunctively related. This enables the model-checking algorithm to search for quantified flow functions in a manner that is similar to a search for a maximal flow. More precisely, a maximal flow with lower and upper bounds on flow on vertices, which can be done in polynomial time. We show that many interesting properties can be specified in \cbfls.
On a high-level view, it is interesting to compare the way \ctls model checking is reduced to a sequence of reachability queries on a modified system --- one that includes the restrictions imposed by the specification, and the way \cbfls model checking is reduced to a sequence of maximal-flow queries in a modified network --- one that includes the restrictions imposed by the specification.
}

One of the concepts that has emerged in the context of formal
verification is that of \emph{model exploration}. The idea, as first
noted by Chan in~\cite{Cha00}, is that, in practice, model checking is
often used for understanding the system rather than for verifying its correctness.
Chan suggested to formalize model exploration by means of \emph{query
checking}.  The input to the query-checking problem is a system $S$ and
a query $\varphi$, where a query is a temporal-logic formula in which
some proposition is replaced by the place-holder ``?''.
A solution to the query is a propositional assertion that, when
replaces the place-holder, results in a formula that is satisfied in
$S$.  For example, solutions to the query $AG (? \rightarrow AX \mathit{grant})$ are assertions $\psi$ for which $S\models AG(\psi \rightarrow AX \mathit{grant})$, thus assertions that trigger a grant in all successive possible futures.
A query checker should return the strongest solutions to the query (strongest
in the sense that they are not implied by other
solutions). The work
of Chan was followed by further work on query checking, studying its
complexity, cases in which only a single strongest solution exists,
the case of multiple (possibly related) place-holders, queries on semantic graphs, understanding business processes through query checking and more~\cite{BG01,GC02,CG03,SV03,CGG07,GRN12,RDMMM14}.
In the context of flow networks, the notion of model explorations is of special interest, and there is on-going research on the development of automatic tools for reasoning about networks and flow-forwarding strategies (c.f.,~\cite{BDVV18,ZSTLLM10}).
We develop a theory of query checking for \bfls.
We distinguish between \emph{propositional queries}, which, as in the case of \ctls, seek strongest propositional assertions that may replace a $?$ that serves as a state formula, and \emph{value queries}, which seek strongest values that may replace a $?$ that serves as a bound in a flow proposition. There, the strength of a solution depends on the tightness of the bound it imposes. We show that while in the propositional case there may be several partially ordered strongest solutions, in the value case the solutions are linearly ordered and thus there is at most one strongest solution. 
The existence of a single strongest solution has some nice practical applications. In particular, we show that while for propositional queries, the problem of finding all strongest solutions is very complex, for value queries its complexity coincides with that of \bfls model checking.

\subsection*{Related work}
There are three types of related works: (1) efforts to generalize the maximal-flow problem to richer settings, (2) extensions of temporal logics by new elements, in particular first-order quantification over new types, and (3) works on logical aspects of networks and their use in formal methods.  Below we briefly survey them and their relation to our work.

As discussed early in this section, numerous extensions to the classical maximal-flow problems have been considered. In particular, some works that add constraints on the maximal flow, like capacities on vertices, or lower bounds on the flow along edges. Closest to flow logics are works that refer to labeled flow networks. For example,~\cite{GCSR13} considers flow networks in which edges are labeled, and the problem of finding a maximal flow with a  minimum number of labels. Then, the maximal utilization problem of \emph{capacitated automata}~\cite{KT14} amounts to finding maximal flow in a labeled flow network where flow is constrained to travel only along paths that belong to a given regular language.
Our work suggests a formalism that embodies all these extensions, as well as a framework for formally reasoning about many more extensions and settings.

The competence of temporal-logic model checking initiated numerous extensions of temporal logics, aiming to capture richer settings. For example, \emph{real-time} temporal logics include clocks with a real-time domain~\cite{AH94a}, \emph{epistemic temporal logics} include knowledge operators~\cite{HM90}, and \emph{alternating temporal logics} include game modalities~\cite{AHK02}. Closest to our work is \emph{strategy logic}~\cite{CHP07}, where temporal logic is enriched by first-order quantification of strategies in a game. Beyond the theoretical interest in strategy logic, it was proven useful in synthesizing strategies in multi-agent systems and in the solution of rational synthesis~\cite{FKL10}.

Finally, network verification is an increasingly important topic in the context of protocol verification~\cite{KVM12}. Tools that allow verifying properties of network protocols have been developed~\cite{WBLS09,LBGJV15}. These tools support verification of network protocols in the design phase as well as runtime verification~\cite{KCZVMW13}. Some of these tools use a query language called \emph{Network Datalog} in order to specify network protocols~\cite{LCGGHMRRS06}. Verification of \emph{Software Defined Networks} has been studied widely, for example in~\cite{KZZCG13,KPCVK12,CVPKR12}. Verification of safety properties in networks with finite-state middleboxes was studied in~\cite{VAPRSSS16}. Network protocols describe forwarding policies for packets, and are thus related to specific flow functions. However, the way traffic is transmitted in these protocols does not correspond to the way flow  travels in a flow network. Thus, properties verified in this line of work are different from these we can reason about with flow logic.


\section{The Flow Logic \texorpdfstring{\bfls}{BFL*}}
A \emph{flow network} is $N=\zug{AP, V,E,c,\rho,s,T}$, where $AP$ is a set of atomic propositions, $V$ is a set of vertices, $s \in V$ is a source vertex,  $T \subseteq V \setminus \{s\}$ is a set of target vertices, $E \subseteq (V \setminus T) \times (V \setminus \{s\})$ is a set of directed edges,
$c:E \rightarrow \N$ is a capacity function, assigning to each edge an integral amount of flow that the edge can transfer, and $\rho:V \rightarrow 2^{AP}$ assigns each vertex $v \in V$ to the set of atomic propositions that are valid in $v$. Note that no edge enters the source vertex or leaves a target vertex. We assume that all vertices $t \in T$ are reachable from $s$ and that each vertex has at least one target vertex reachable from it. For a vertex $u \in V$, let $E^u$ and $E_u$ be the sets of incoming and outgoing edges to and from $u$, respectively. That is, $E^u=(V \times \{u\}) \cap E$ and $E_u=(\{u\} \times V) \cap E$.

A \emph{flow} is a function $f:E \rightarrow \N$  that describes how flow is directed in $N$. The capacity of an edge bounds the flow in it, thus for every edge $e \in E$, we have $f(e) \leq c(e)$. All incoming flow must exit a vertex, thus for every vertex $v \in V \setminus (\{s\} \cup T)$, we have $\sum_{e \in E^v}f(e)=\sum_{e \in E_v}f(e)$.
We extend $f$ to vertices and use $f(v)$ to denote the flow that travels through $v$. Thus, for $v \in V \setminus (\{s\} \cup T)$, we  define $f(v)=\sum_{e \in E^v}f(e)=\sum_{e \in E_v}f(e)$, for the source vertex $s$, we define $f(s)= \sum_{e \in E_s}f(e)$, and for a target vertex $t \in T$, we define $f(t)= \sum_{e \in E^t}f(e)$. 
Note that the preservation of flow in the internal vertices guarantees that $f(s)=\sum_{t \in T}f(t)$, which is the amount of flow that travels from $s$ to all the target vertices together. We say that a flow function $f$ is \emph{maximal} if for every flow function $f'$, we have $f'(s) \leq f(s)$. A maximal flow function can be found in polynomial time~\cite{FF62}. The maximal flow for $N$ is then $f(s)$ for some maximal flow function $f$.

The logic \bfls is a \emph{Branching Flow Logic} that can specify properties of networks and flows in them.
As in \ctls, there are two types of formulas in \bfls:
\emph{state formulas}, which describe vertices in a network, and
\emph{path formulas}, which describe paths. In addition to the operators in \ctls, the logic \bfls has \emph{flow propositions}, with which one can specify the flow in vertices, and \emph{flow quantifiers}, with which one can quantify flow functions universally or existentially. When flow is not quantified, satisfaction is defined with respect to both a network and a flow function.
Formally, given a set $AP$ of atomic propositions, a \bfls state formula is one of the following:
\begin{enumerate}[label={\textbf{(S\arabic*):}}]
\item
  An atomic proposition $p\in AP$.
\item
  A flow proposition $> \gamma$ or $\geq \gamma$, for an integer $\gamma \in \N$.
\item
  $\neg\varphi_1$ or $\varphi_1\vee\varphi_2$, for \bfls state  formulas $\varphi_1$
  and $\varphi_2$.
\item
  $A\psi$, for a \bfls path formula $\psi$ (and $A$ is termed a \emph{path quantifier}).
  \item
  $\Af \varphi$, for a \bfls state formula $\varphi$ (and $\Af$ is termed a \emph{flow quantifier}).
\end{enumerate}
A \bfls path formula is one of the following:
\begin{enumerate}[label={\textbf{(P\arabic*):}}]
\item
  A \bfls state formula.
\item
  $\neg\psi_1$ or $\psi_1\vee\psi_2$, for \bfls path  formulas $\psi_1$ and $\psi_2$.
\item
  $X\psi_1$ or $\psi_1U\psi_2$, for \bfls path  formulas $\psi_1$ and $\psi_2$.
\end{enumerate}

\noindent
We say that a \bfls formula $\varphi$ is \emph{closed} if all flow propositions appear in the scope of a flow quantifier. The logic \bfls consists of the set of closed \bfls state formulas.
We refer to state formula of the form $\Af \varphi$ as a \emph{flow state formula}.

The semantics of \bfls is defined with respect to vertices in a
flow network.
Before we define the semantics, we need some more definitions and notations. Let  $N=\zug{AP, V,E,c,\rho,s,T}$.
For two vertices $u$ and $w$ in $V$, a finite sequence $\pi=v_0,v_1,\ldots,v_k \in V^*$ of vertices is a \emph{$(u,w)$-path} in $N$ if $v_0=u$, $v_k=w$, and $\zug{v_i,v_{i+1}} \in E$ for all $0 \leq i < k$. If $w \in T$, then $\pi$ is a \emph{target $u$-path}.

State formulas are interpreted with respect to a vertex $v$ in $N$ and a flow function $f: E \rightarrow \N$. When the formula is closed, satisfaction is independent of the function $f$ and we omit it.
We use
$v,f \models \varphi$ to indicate that the vertex $v$ satisfies the state formula $\varphi$ when the flow function is $f$. The relation $\models$ is defined inductively as follows.

\begin{enumerate}[leftmargin=13mm,align=left]
\item[\textbf{(S1):}]
For an atomic proposition $p\in AP$, we have that $v,f\models p$ iff $p\in\rho(v)$.
\item[\textbf{(S2):}]
For $\gamma\in\N$, we have $v,f\models > \gamma$ iff $f(v) > \gamma$ and $v,f\models \geq \gamma$ iff $f(v) \geq \gamma$.
\item[\textbf{(S3a):}]
$v,f\models\neg\varphi_1$ iff $v,f\not\models\varphi_1$.
\item[\textbf{(S3b):}]
  $v,f\models\varphi_1\vee\varphi_2$ iff $v,f\models\varphi_1$ or $v,f\models\varphi_2$.
  \item[\textbf{(S4):}]
$v,f \models A\psi$ iff for all target $v$-paths $\pi$, we have that $\pi,f \models \psi$.
\item[\textbf{(S5):}]
$v,f\models \Af\varphi$ iff for all flow functions $f'$, we have $v,f' \models \varphi$.
\end{enumerate}

\noindent
Path formulas are interpreted with respect to a finite path $\pi$ in $N$ and a flow function $f:E \rightarrow \N$. We use
$\pi,f \models \varphi$ to indicate that the path $\pi$ satisfies the path-flow formula $\psi$ when the flow function is $f$. The relation $\models$ is defined inductively as follows.  Let $\pi=v_0,v_1,\ldots,v_k$. For $0 \leq i \leq k$, we use $\pi^i$ to denote the suffix of $\pi$ that starts at $v_i$, thus $\pi^i=v_i,v_{i+1},\ldots, v_k$.
\begin{enumerate}[leftmargin=13mm,align=left]
\item[\textbf{(P1):}]
   For a state formula $\varphi$, we have that $\pi,f\models \varphi$ iff $v_0,f \models \varphi$.
\item[\textbf{(P2a):}]
  $\pi,f \models\neg\psi$ iff $\pi,f \not\models\psi$.
\item[\textbf{(P2b):}]
  $\pi,f \models\psi_1\vee\psi_2$ iff $\pi,f \models\psi_1$ or
  $\pi,f \models\psi_2$.
\item[\textbf{(P3a):}]
  $\pi,f \models X\psi_1$ iff $k > 0$ and $\pi^1,f \models \psi_1$.
  \item[\textbf{(P3b):}]
  $\pi,f \models \psi_1 U \psi_2$ iff there is $j \leq k$ such that $\pi^j,f \models \psi_2$, and for all $0 \leq i < j$, we have $\pi^i,f \models \psi_1$
\end{enumerate}

\noindent
For a network $N$ and a closed \bfls formula $\varphi$, we say that $N$ satisfies $\varphi$, denoted $N \models \varphi$, iff $s \models \varphi$ (note that since $\varphi$ is closed, we do not specify a flow function).

Additional Boolean connectives and modal operators are defined from
$\neg$, $\vee$, $X$, and $U$ in the usual manner;
in particular, $F\psi\,=\,\true U\psi$ and $G\psi\,=\,\neg F\neg\psi$.
We also define dual and abbreviated flow propositions: $(<\gamma) = \neg (\geq \gamma)$, $(\leq \gamma) = \neg (> \gamma)$, and $(\gamma) = (=\gamma) = (\leq \gamma) \wedge (\geq \gamma)$, a dual path quantifier: $E\psi=\neg A\neg \psi$, and a dual flow quantifier: $\Ef \varphi = \neg \Af \neg \varphi$.

\begin{exa}
Consider a network $N$ in which target vertices are labeled by an atomic proposition $\mathit{target}$, and low-security vertices are labeled $\mathit{red}$. The \bfls formula $\Ef EF(\mathit{target} \wedge (=20))$ states that there is a flow in which $20$ units reach a target vertex,
and the \bfls formula $\Af ((\geq 20) \rightarrow AX (\geq 4))$ states that in all flow functions in which the flow at the source is at least $20$, all the successors must have flow of at least $4$. Finally, $\Ef ((\geq 100) \wedge AG(\mathit{red} \rightarrow (\leq 20)))$ states that there is a flow of at least $100$ in which the flow in every low-security vertex is at most $20$, whereas $\Af ((> 200) \rightarrow EF(\mathit{red} \wedge (> 20)))$ states that when the flow is above $200$, then there must exist a low-security vertex in which the flow is above $20$.
%
As an example to a \bfls formula with an alternating nesting of flow quantifiers, consider the formula $\Ef AG( <10 \rightarrow \Af < 15)$, stating that there is a flow such that wherever the flow is below $10$, then in every flow it would be below $15$.
\qed%
\end{exa}

\begin{rem}%
\label{fin paths}
Note that while the semantics of \ctls and LTL is defined with respect to infinite trees and paths, path quantification in \bfls ranges over finite paths.  We are still going to use techniques and results known for \ctls and LTL in our study. Indeed, for upper bounds, the transition to finite computations only makes the setting simpler. Also, lower-bound proofs for \ctls and LTL are based on an encoding of finite runs of Turing machines, and apply also to finite paths.

Specifying finite paths, we have a choice between
a weak and a strong semantics for the $X$ operator. In the weak semantics, the last vertex in a path satisfies $X \psi$, for all $\psi$. In particular, it is the only vertex that satisfies $X\false$. In the strong semantics, the last vertex does not satisfy $X \psi$, for all $\psi$. In particular, it does not satisfy $X\true$.
We use the strong semantics.
\end{rem}

\section{Properties of \texorpdfstring{\bfls}{BFL*}}%
\label{section properties}

\subsection{Integral vs.\ non-integral flow functions}%
\label{reals}

Our semantics of \bfls considers \emph{integral flow functions}: vertices receive integral incoming flow and partition it to integral flows in the outgoing edges. Integral-flow functions arise naturally in settings in which the objects we transfer along the network cannot be partitioned into fractions, as is the case with cars, packets, and more.
Sometimes, however, as in the case of liquids, flow can be partitioned arbitrarily. In the traditional maximum-flow problem, it is well known that the maximum flow can be achieved by integral flows~\cite{FF56}. We show that, interestingly, in the richer setting of flow logic, restricting attention to integral flows may change the satisfaction value of formulas.

\begin{prop}%
\label{allow r}
Allowing the quantified flow functions in \bfls to get values in $\R$ changes its semantics.
\end{prop}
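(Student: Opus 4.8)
The plan is to exhibit a single flow network $N$ together with a closed \bfls formula $\varphi$ whose truth value flips when the flow quantifiers are allowed to range over functions $f:E\to\R$ rather than $f:E\to\N$. The guiding intuition, as noted in the introduction, is that a formula can \emph{force} a positive amount of flow to be split among several successors, and such a split is realizable by a fractional flow but not by an integral one.

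Concretely, I would take the network with $V=\{s,u,w\}$, source $s$, targets $T=\{u,w\}$, edges $E=\{\zug{s,u},\zug{s,w}\}$, and unit capacities $c(\zug{s,u})=c(\zug{s,w})=1$ (the labeling $\rho$ is irrelevant here, and both targets are reachable from $s$, so $N$ is well formed). As the distinguishing formula I would use
\[
\varphi \;=\; \Ef\bigl((=1)\wedge AX(<1)\bigr).
\]
Here $(=1)$ asserts that the flow at the source equals $1$, and $AX(<1)$ asserts that along every target $s$-path the immediate successor carries flow strictly below $1$; since the target $s$-paths are exactly $s,u$ and $s,w$, this amounts to $f(u)<1$ and $f(w)<1$.

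First I would verify the $\R$-valued case: the flow $f$ with $f(\zug{s,u})=f(\zug{s,w})=\tfrac12$ is legal (it respects the capacities and flow conservation) and has $f(s)=1$ and $f(u)=f(w)=\tfrac12<1$, so $s,f\models(=1)\wedge AX(<1)$ and hence $N\models\varphi$ under the $\R$-semantics. Next I would rule out the integral case: any flow with $f(s)=1$ satisfies $f(u)+f(w)=1$ by conservation at the source, so over $\N$ one of $f(u),f(w)$ equals $1$; that successor then fails $(<1)$, so no integral flow satisfies the inner state formula, and $N\not\models\varphi$ under the $\N$-semantics. Since the two semantics disagree on the pair $(N,\varphi)$, the proposition follows.

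The argument is short, so there is no genuine obstacle beyond bookkeeping. The only points that require care are checking that the chosen fractional flow is indeed legal (capacities and conservation), and unfolding $AX$ correctly under the finite-path, strong-$X$ semantics of Remark~\ref{fin paths} --- specifically noting that the strong reading of $X$ is harmless here because both target $s$-paths have a genuine successor of $s$, so $AX(<1)$ reduces exactly to the conjunction $f(u)<1 \wedge f(w)<1$.
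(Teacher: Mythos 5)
Your proof is correct and follows essentially the same route as the paper: the paper also uses a source with two unit-capacity successors and the formula $\Ef(1\wedge AX(>0))$, which on that network is equivalent to your $\Ef((=1)\wedge AX(<1))$ since $f(u)+f(w)=1$ forces $f(u)<1 \iff f(w)>0$. The verification of both the fractional witness and the integral impossibility matches the paper's argument.
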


\noindent
\begin{minipage}{0.70\textwidth}
\begin{proof}
Consider the network on the right. The \bfls formula $\varphi=\Ef (1\wedge AX (> 0))$ states that there is a flow function in which the flow that leaves the source is $1$ and the flow of both its successors is strictly positive. It is easy to see that while no integral flow function satisfies the requirement in $\varphi$, a flow function in which $1$ unit of flow in $s$ is partitioned between $u$ and $v$ does satisfy it.
\end{proof}
\end{minipage} \ \hspace{.2in}
\begin{minipage}{0.25\textwidth}
\includegraphics[scale=0.3]{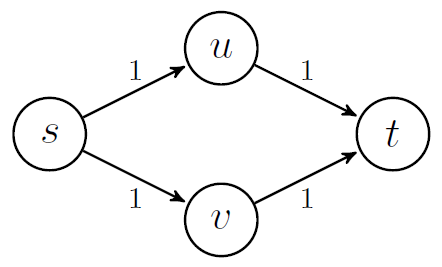}
\end{minipage}

\stam{
\begin{proof}
Consider the network $N$ appearing in Figure~\ref{real fig}. The \bfls formula $\varphi=\Ef (1\wedge AX (> 0))$ states that there is a flow function in which the flow that leaves the source is $1$ and the flow of both its successors is strictly positive. It is easy to see that while no integral flow function satisfies the requirement in $\varphi$, a flow function in which $1$ unit of flow in $s$ is partitioned between $u$ and $v$ does satisfy it.
\begin{figure}[ht]
\begin{center}
\includegraphics[scale=0.32]{R_changes_semantics.png}
\normalsize
\caption{The flow network $N$}%
\label{real fig}
\end{center}
\end{figure}
\end{proof}
}

\vspace{2mm}
Proposition~\ref{allow r} suggests that quantification of flow functions that allow non-integral flows may be of interest.  In Section~\ref{with ni flow} we discuss such an extension.

\subsection{Sensitivity to unwinding}%
\label{unwinding}
For a network $N=\zug{AP, V,E,c,\rho,s,T}$, let $N_t$ be the unwinding of $N$ into a tree.  Formally, $N_t=\zug{AP,V',E',\rho',s,T'}$, where $V' \subseteq V^*$ is the smallest set such that $s \in V'$, and for all $w \cdot v \in V'$ with $w \in V^*$ and $v \in V \setminus T$, and all $u \in V$ such that $E(v,u)$, we have that $w \cdot v \cdot u \in V'$, with $\rho'(w \cdot v \cdot u)=\rho(u)$. Also, $\zug{w \cdot v,w \cdot v \cdot u} \in E'$, with $c'(\zug{w \cdot v,w \cdot v \cdot u})=c(\zug{v,u})$. Finally, $T'=V' \cap (V^*\cdot T)$. Note that $N_t$ may be infinite. Indeed, a cycle in $N$ induces infinitely many vertices in $N_t$.

 The temporal logic \ctls is insensitive to unwinding. Indeed, $N$ and $N_t$ are bisimilar, and  for every \ctls formula $\varphi$, we have $N \models \varphi$ iff $N_t \models \varphi$~\cite{Mil71}.  We show that this is not the case for \bfls.

\begin{prop}
The value of the maximal flow is sensitive to unwinding.
\end{prop}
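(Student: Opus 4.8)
The plan is to exhibit a single network $N$ whose maximal flow differs from that of its tree unwinding $N_t$. The phenomenon I want to exploit is the one already flagged after the definition of $N_t$: unwinding creates a separate copy of a vertex for every distinct path reaching it, and with each copy it creates a private copy of that vertex's outgoing edges. Consequently, an edge that is a shared bottleneck in $N$ is multiplied in $N_t$, and the maximal flow can strictly increase.

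Concretely, I would take the ``diamond'' network with vertices $s,a,b,c,t$, edges $\langle s,a\rangle,\langle s,b\rangle,\langle a,c\rangle,\langle b,c\rangle,\langle c,t\rangle$, target set $T=\{t\}$, and capacity $1$ on every edge. This $N$ satisfies the standing assumptions: $t$ is reachable from $s$, every vertex can reach $t$, and no edge enters $s$ or leaves $t$. Its maximal flow is exactly $1$: since $c$ is the only predecessor of the only target $t$, all flow must traverse $\langle c,t\rangle$, so the maximal flow is at most $1$; and the path $s,a,c,t$ already realizes one unit, so the maximal flow equals $1$.

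Next I would unfold the unwinding explicitly. Because $c$ is reached in $N$ by the two distinct paths $s,a,c$ and $s,b,c$, the definition of $N_t$ produces two copies of $c$, namely $s\cdot a\cdot c$ and $s\cdot b\cdot c$, each with a private copy of the edge $\langle c,t\rangle$ of capacity $1$, leading to two distinct targets $s\cdot a\cdot c\cdot t$ and $s\cdot b\cdot c\cdot t$. Thus the former bottleneck is split into two independent unit-capacity edges. Routing one unit along $s,\,s\cdot a,\,s\cdot a\cdot c,\,s\cdot a\cdot c\cdot t$ and one unit along $s,\,s\cdot b,\,s\cdot b\cdot c,\,s\cdot b\cdot c\cdot t$ is a legal flow of value $2$; and since the two edges leaving $s$ have total capacity $2$, this flow is maximal. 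Hence the maximal flow of $N_t$ is $2\neq 1$, proving the claim.

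The only real care needed is in reading the unwinding definition correctly: one must confirm that it is the shared vertex $c$, and therefore the edge $\langle c,t\rangle$, that gets duplicated, rather than merely the paths feeding into it. I would therefore spell out the split of $c$ into $s\cdot a\cdot c$ and $s\cdot b\cdot c$ and the resulting two copies of $\langle c,t\rangle$ in full. No estimates or limiting arguments are involved, and since the diamond is acyclic the unwinding stays finite; the proof reduces to comparing two small and easily verified max-flow computations.
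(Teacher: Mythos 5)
Your proposal is correct and takes essentially the same approach as the paper: both exhibit a concrete network in which a vertex reachable along two distinct paths is duplicated by the unwinding, so a shared bottleneck edge is replicated and the maximal flow strictly increases (the paper's example goes from $7$ to $8$; your diamond goes from $1$ to $2$). Your reading of the unwinding definition is accurate and the two max-flow computations check out.
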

\begin{proof}
Consider the network $N$ appearing in Figure~\ref{n and nt} (with the target $T=\{t\}$), and its unwinding $N_t$ which appears in its right. It is easy to see that the value of the maximal flow in $N$ is $7$, and the value of the maximal flow from $s$ to $T'$ in $N_t$ is $8$.
\end{proof}

\begin{figure}[ht]
\begin{center}
\includegraphics[scale=0.35]{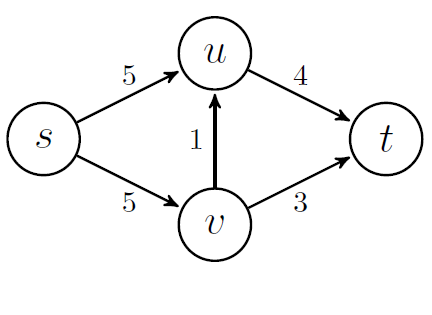}
\hspace{0.5cm}
\includegraphics[scale=0.35]{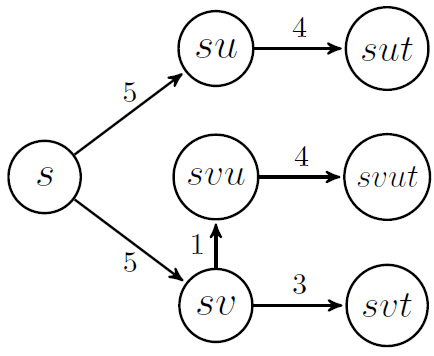}
\normalsize
\caption{The flow network $N$ and its unwinding $N_t$}%
\label{n and nt}
\end{center}
\end{figure}

\begin{cor}
The logic \bfls is sensitive to unwinding.
\end{cor}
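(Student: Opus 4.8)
The plan is to exploit the proposition just established: since the value of the maximal flow differs between $N$ and its unwinding $N_t$, I will construct a single closed \bfls formula whose truth at the source pins down whether the maximal flow reaches a fixed threshold, and then observe that this formula separates $N$ from $N_t$.

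First I would recall that a flow proposition evaluated at the source vertex $s$ refers, by definition, to $f(s) = \sum_{e \in E_s} f(e)$, i.e.\ the total flow leaving the source. Hence the closed state formula $\varphi = \Ef(\geq 8)$ asserts, reading rule (S5) through the dual $\Ef \varphi = \neg \Af \neg \varphi$, that there exists a legal flow function $f'$ with $f'(s) \geq 8$ --- equivalently, that the maximal flow of the network is at least $8$. I would then reuse the very network $N$ and its unwinding $N_t$ from the preceding proposition, where the maximal flow of $N$ is $7$ while that of $N_t$ is $8$. It follows immediately that $N \not\models \varphi$ whereas $N_t \models \varphi$, so $\varphi$ is satisfied by exactly one of the two structures. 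Since $N$ and $N_t$ are bisimilar yet disagree on $\varphi$, the logic \bfls is sensitive to unwinding, in contrast to \ctls.

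The only point I would verify carefully --- rather than any real obstacle --- is the correspondence between the flow proposition at $s$ and the maximal-flow value: I must confirm that evaluating $(\geq 8)$ at the source quantifies over the global quantity $f(s)$ and not over some local edge value, and that the existential flow quantifier $\Ef$ ranges over all legal integral flow functions, so that $s \models \Ef(\geq 8)$ holds exactly when some legal flow attains value at least $8$ at the source. Given the definitions of $f(s)$ and of rules (S2) and (S5), this is immediate, and the corollary requires no further work beyond invoking the figure and the flow values computed in the proof of the proposition.
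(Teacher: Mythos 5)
Your proof is correct and matches the paper's (implicit) argument: the corollary is stated as an immediate consequence of the preceding proposition, precisely because the threshold ``maximal flow $\geq 8$'' is expressible by the closed formula $\Ef(\geq 8)$, which then separates $N$ (maximal flow $7$) from its unwinding $N_t$ (maximal flow $8$). Your careful check that the flow proposition at $s$ refers to $f(s)=\sum_{e\in E_s}f(e)$ and that $\Ef$ ranges over all legal flow functions is exactly the right point to verify, and nothing further is needed.
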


\stam{
\begin{prop}
The logic \bfls is sensitive to unwinding.
\end{prop}

\begin{proof}
Consider the network $N$ appearing in Figure~\ref{n and nt}, and the \bfls formula $\varphi = \Af ((AX (\geq 2)) \rightarrow AXAX (\geq 2))$. The formula states that in all flow functions, if the flow in the successors of $s$ is at least $2$, then so is the flow in all their successors.
It is easy to see that $N \models \varphi$. Indeed, if the flow in both $u$ and $v$ is at least $2$, then so is the flow in $u$ and $t$. The unwinding $N_t$ of $N$, which appears in its right, does not satisfy $\varphi$. Indeed, there are flow functions in which the flow in $s \cdot u$ and $s \cdot v$ is at least $2$ and still the flow in vertex $s \cdot v \cdot u$ is $1$ or $0$.
\end{proof}
} 

The sensitivity of \bfls to unwinding suggests that extending \bfls with past operators can increase its expressive power. In Section~\ref{with past}, we discuss such an extension.

\stam{
\begin{figure}[ht]
\begin{minipage}[b]{0.55\linewidth}
\begin{center}
\includegraphics[scale=0.3]{before_unwinding.png}
\hspace{0.3cm}
\includegraphics[scale=0.3]{after_unwinding.png}
\normalsize
\caption{The flow network $N$ and its unwinding $N_t$.}%
\label{n and nt}
\end{center}
\end{minipage}
\quad
\begin{minipage}[b]{0.4\linewidth}
\begin{center}
\includegraphics[scale=0.3]{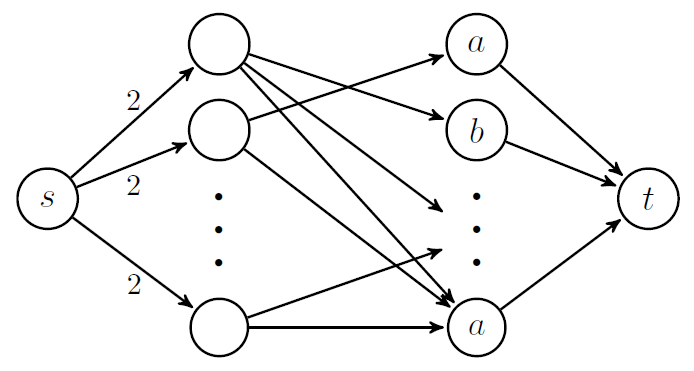}
\caption{Assigning workers to jobs.}%
\label{wt}
\end{center}
\end{minipage}
\end{figure}
}

\section{Extensions and Fragments of \texorpdfstring{\bfls}{BFL*}}%
\label{exts}
In this section we discuss useful extensions and variants of \bfls, as well as fragments of it. As we shall show in the sequel, while the extensions come with no computational price, their model checking requires additional  techniques.

\subsection{Positive path quantification}%
\label{ppq}
Consider a network $N=\zug{AP, V,E,c,\rho,s,T}$ and a flow function $f: E \rightarrow \N$. We say that a path
$\pi=v_0,v_1,\ldots,v_k$ is positive if the flow along all the edges in $\pi$ is positive. Formally, $f(v_i,v_{i+1})>0$, for all  $0 \leq i < k$. Note that it may be that $f(v_i)>0$ for all $0 \leq i < k$ and still $\pi$ is not positive. It is sometimes desirable to restrict the range of path quantification to paths along which flow travels. This is the task of the \emph{positive path quantifier} $A^+$, with the following semantics.
\begin{itemize}
\item
$v,f \models A^+\psi$ iff for all positive target $v$-paths $\pi$, we have that $\pi,f \models \psi$.
\end{itemize}

Dually, $E^+\psi = \neg A^+ \neg \psi$.

\begin{exa}%
\label{tw example}
Let $W$ be a set of workers and $J$ be a set of jobs. Each worker $w \in W$ can be assigned to perform jobs from a subset $J_w \subseteq J$. It is required to perform all jobs by assigning exactly one worker to each job and at most two jobs to each worker. This problem can be solved using the flow network appearing in Figure~\ref{wt}. The vertices on the left column correspond to the workers, these on the right column correspond to the jobs, and there is an edge between a worker $w$ and a job $j$ iff $j \in J_w$. The capacity of all edges is $1$, except for these from the source $s$ to the worker vertices, which have capacity $2$. It is easy to see that a flow of $k$ units in the network described in Figure~\ref{wt} corresponds to a legal assignment in which $k$ jobs are performed. Assume now that some jobs should be processed in Location $a$ and the others in Location $b$. A worker can process jobs only in a single location, $a$ or $b$. By labeling the job-vertices by their location, the existence of a legal assignment in which $k$ jobs are processed can be expressed by the
\bfls formula
\[
    \Ef (k \wedge AX(A^+Xa \vee  A^+Xb)),
\]
which uses positive path quantification.
\qed%
\end{exa}

\begin{figure}[ht]
\begin{center}
\includegraphics[scale=0.32]{workers_and_jobs.png}
\caption{Assigning workers to jobs.}%
\label{wt}
\end{center}
\end{figure}

\stam{
Moreover, assume now that the partition of the jobs between the locations is not given. The existence of a partition with which the required assignment exists and handles $k$ jobs amount to satisfaction of the \bfls formula
$\Ef (k \wedge A^+XEX0)$
in the network appearing in Figure~\ref{wt no locations}.
}

\subsection{Maximal flow quantification}%
\label{mfq}

 It is sometimes desirable to restrict the range of flow quantification to maximal flow functions. This is the task of the \emph{maximal-flow quantifier} $\Af^\mathit{max}$, with the following semantics.
\begin{itemize}
\item
$v,f \models\Af^\mathit{max}\varphi$ iff for all maximal-flow functions $f'$, we have that $v,f' \models \varphi$.
\end{itemize}

\noindent
Dually, $\Ef^\mathit{max}\varphi = \neg \Af^\mathit{max} \neg \varphi$.

In a similar manner, it is sometimes helpful to relate to the maximal flow in the network. The max-flow constant $\gamma_\mathit{max} \in N$ maintains the value of the maximal flow from $s$ to $T$. We also allow arithmetic operations on $\gamma_\mathit{max}$.

\begin{exa}%
\label{tw example2}
Recall the job-assignment problem from Example~\ref{tw example}. The
\bfls
formula
\[\Ef^\mathit{max} (AX(A^+Xa \vee  A^+Xb))\] states that the requirements about the locations do not reduce the number of jobs assigned without this requirement. Then, the formula
\[\Ef ((\geq \gamma_\mathit{max} - 4) \wedge AX(A^+Xa \vee  A^+Xb))\] states that the requirements about the locations may reduce the number of jobs performed by at most~$4$.
\qed%
\end{exa}

\subsection{Non-integral flow quantification}%
\label{with ni flow}
As discussed in Section~\ref{reals}, letting flow quantification range over non-integral flow functions may change the satisfaction value of a \bfls formula.

Such a quantification is sometimes desirable, and we extend \bfls with a \emph{non-integral flow quantifier} $\Af^{\R}$, with the following semantics.
\begin{itemize}
\item
$v,f \models\Af^{\R}\varphi$ iff for all flow functions $f':E \rightarrow \R^+$, we have that $v,f' \models \varphi$.
\end{itemize}

Dually, $\Ef^{\R}\varphi = \neg \Af^{\R} \neg \varphi$.

\subsection{Past operators}%
\label{with past}

As discussed in Section~\ref{unwinding}, while temporal logics are insensitive to unwinding, this is not the case for \bfls. Intuitively, this follows from the fact that the flow in a vertex depends on the flow it gets from all its predecessors.
This dependency suggests that an explicit reference to predecessors is useful, and motivates the extension of \bfls by past operators.

Adding past to a branching logic, one can choose between a linear-past semantics ---~one in which past is unique (technically, the semantics is with respect to an unwinding of the network), and a branching-past semantics --- one in which all the possible behaviors that lead to present are taken into an account (technically, the semantics is dual to that of future operators, and is defined with respect to the network)~\cite{KPV12}. For flow logics, the branching-past approach is the suitable one, and is defined as follows.

For a path $\pi=v_0,v_1,\ldots,v_k \in V^*$, a vertex $v \in V$, and index $0 \leq i \leq k$, we say that $\pi$ is a source-target $(v,i)$-path if $v_0=s$, $v_i=v$, and $v_k \in T$. We add to \bfls two past modal operators, $Y$ (``Yesterday'') and $S$ (``Since''), and adjust the semantics as follows. Defining the semantics of logics that refer to the past, the semantics of path formulas is defined with respect to a path and an index in it. We use $\pi,i,f \models \psi$ to indicate that the path $\pi$ satisfies the path formula $\psi$ from position $i$ when the flow function is $f$.
%
For state formulas, we adjust the semantics as follows.
\begin{enumerate}[leftmargin=12mm]
\item[\textbf{(S4):}]
$v,f \models A\psi$ iff for all $i$ and for all source-target $(v,i)$-paths $\pi$, we have that $\pi,i,f \models \psi$.
\end{enumerate}

\noindent
Then, for path formulas, we have the following (the adjustment to refer to the index $i$ in all other modalities is similar).
\begin{itemize}
\item
  $\pi,i,f \models Y\psi_1$ iff $i > 0$ and $\pi,i-1,f \models \psi_1$.
  \item
  $\pi,i,f \models \psi_1 S \psi_2$ iff there is $0 \leq j <i$ such that $\pi,j,f \models \psi_2$, and for all $j+1 \leq l \leq i$, we have $\pi,l,f \models \psi_1$.
 \end{itemize}

\begin{exa}
Recall the job-assignment problem from Example~\ref{tw example}. Assume that some of the workers have cars, which is indicated by an atomic proposition $\mathit{car}$ that may label worker vertices. Also, say that a job is a
\emph{transit job} if it can be assigned only to workers with cars. Assume that
we want to apply the restriction about the single location only to non-transit jobs. Thus, if all~the predecessors of a job-vertex are labeled by \textit{car}, then this job can be served in either locations. Using past operators, we can specify this property by the formula
\[\Ef (k \wedge AX(A^+X(a \vee AY \mathit{car}) \vee  A^+X(b \vee AY \mathit{car}))).\]
\qed%
\end{exa}

\subsection{First-Order quantification on flow values}%
\label{foq}
The flow propositions in \bfls include constants. This makes it impossible to relate the flow in different vertices other than specifying all possible constants that satisfy the relation. In \emph{\bfls with quantified flow values} we add flow variables $X=\{x_1,\ldots,x_n\}$ that can be quantified universally or existentially and specify such relations conveniently. We also allow the logic to apply arithmetic operations on the values of variables in $X$.

For a set of arithmetic operators $O$ (that is, $O$ may include $+, *$, etc.), let BFL$^\star(O)$ be \bfls in which Rule S2 is extended to allow expressions with variables in $X$ constructed by operators in $O$, and we also allow quantification on the variables in $X$. Formally, we have the following:
\begin{enumerate}[leftmargin=11mm]
\item[\textbf{(S2)}] A flow proposition $>g(x_1,\ldots,x_k)$ or $\geq~g(x_1,\ldots,x_k)$, where $x_1,\ldots,x_k$ are variables in $X$ and $g$ is an expression obtained from $x_1,\ldots,x_k$ by applying operators in $O$, possibly using constants in $\N$. We assume that $g:\N^k \rightarrow \N$. That is, $g$ leaves us in the domain $\N$.
\item[\textbf{(S6)}] $\forall x \varphi$, for $x \in X$ and a BFL$^\star(O)$ formula $\varphi$ in which $x$ is free.
\end{enumerate}

\noindent
For a BFL$^\star(O)$ formula $\varphi$ in which $x$ is a free variable, and a constant $\gamma\in \N$, let $\varphi[x \leftarrow \gamma]$ be the formula obtained by assigning $\gamma$ to $x$ and replacing expressions by their evaluation.
Then, $v,f \models \forall x \varphi'$ iff for all $\gamma \in \N$, we have that $v,f \models \varphi'[x \leftarrow \gamma]$.

\begin{exa}
The logic BFL$^\star(\emptyset)$ includes the formula
$\Ef AG \forall x ((\mathit{split} \wedge x \: \wedge >0) \rightarrow EX (>0 \:\wedge <x))$,
stating that there is a flow in which all vertices that are labeled \textit{split} and with a positive flow $x$ have a successor in which the flow is positive but strictly smaller than $x$. Then,  BFL$^\star( {\rm div})$ includes the formula
$\Ef AG \forall x (x \rightarrow EX (\geq x \ {\rm div}\   2))$,
stating that there is a flow in which all vertices have a successor that has at least half of their flow.

Finally, BFL$^\star(+)$  includes the formula
$\exists x \exists y \Ef^\mathit{max} AG(\neg (\mathit{source} \vee \mathit{target}) \rightarrow x \vee y \vee (x+y))$,
stating that there are values $x$ and $y$, such that it is possible to attain the maximal flow by assigning to all vertices, except maybe source and target vertices,  values in $\{x,y,x+y\}$.
\qed%
\end{exa}

\stam{ 
\subsection{First-Order quantification on flow values}%
\label{foq}
The flow propositions in \bfls include constants. This makes it impossible to relate the flow in different vertices other than specifying all possible constants that satisfy the relation. In \bfls with quantified flow value we add flow variables $X=\{x_1,\ldots,x_n\}$ that can be quantified universally or existentially and specify such relations conveniently. We also allow the logic to apply arithmetic operations on the values of $X$.

For a set of arithmetic operations $O$ (that is, $O$ may include $+, *$, etc.), let BFL$^\star(O)$ be \bfls in which Rule S2 is extended to allow expressions with variables in $X$ constructed with operations in $O$, and we also allow quantification on the variables in $X$. Formally, we have the following:
\begin{description}
\item[(S2)] A flow proposition $>g(X)$ and $\geq g(X)$, where $g$ is an expression generated from atoms in $X \cup \N$ by applying operations in $O$.
\item[(S6)] $\forall x \varphi$, for a BFL$^\star(O)$ formula $\varphi$ in which the variable $x$ is free.
\end{description}

For a BFL$^\star(O)$ formula $\varphi$ in which $x$ is a free variable, and a constant $\gamma\in \N$, let $\varphi[x \leftarrow \gamma]$ be the formula obtained by assigning $\gamma$ to $x$ and replacing expressions by their evaluation (yak yak $\N$ vs. $\R$).
Then, if $\varphi=\forall x \varphi'$ is a formula without free variables, namely, the only free variable in $\varphi'$ is $x$, then $v,f \models \varphi$ iff for all $\gamma \in \N$, we have that $v,f \models \varphi'[x \leftarrow \gamma]$.

\begin{exa}
The logic BFL$^\star(\emptyset)$ includes the formula
\[\Ef AG \forall x [(\mathit{split} \wedge x \wedge >0) \rightarrow EX (>0 \wedge <x)],\]
stating that there is a flow in which all vertices that are labeled \textit{split} and have with a positive flow $x$ have a successor in which the flow is positive but strictly smaller than $x$. Then,  BFL$^\star(*)$ includes the formula
\[\Ef AG \forall x [x \rightarrow EX (\geq \lceil x/2 \rceil)],\]
stating that there is a flow in which all vertices have a successor that has at least half of their flow.
\qed%
\end{exa}
}

\subsection{Fragments of \texorpdfstring{\bfls}{BFL*}}%
\label{frags}
For the temporal logic \ctls, researchers have studied several fragments, most notably LTL and CTL\@. In this section we define interesting fragments of \bfls.

\vspace{3mm}
\noindent
{\bf  Flow-\ctls and Flow-LTL.}
The logics \emph{Flow-\ctls} and \emph{Flow-LTL} are extensions of \ctls and LTL in which atomic state formulas may be, in addition to $AP$s, also the flow propositions $> \gamma$ or $\geq \gamma$, for an integer $\gamma \in \N$. Thus, no quantification on flow is allowed, but atomic formulas may refer to flow.
The semantics of Flow-\ctls is defined with respect to a network and a flow function, and that of Flow-LTL is defined with respect to a path in a network and a flow function.

\vspace{3mm}
\noindent
{\bf Linear Flow Logic.}
The logic \lfl is the fragment of \bfls in which only one external universal path quantification is allowed. Thus, an \lfl formula is a \bfls formula of the form $A \psi$, where $\psi$ is generated without Rule S4.

Note that while the temporal logic LTL is a ``pure linear'' logic, in the sense that satisfaction of an LTL formula in a computation of a system is independent of the structure of the system, the semantics of LFL mixes linear and branching semantics. Indeed, while all the paths in $N$ have to satisfy $\psi$, the context of the system is important. To see this, consider the LFL formula $\varphi=A \Af ((\geq 10) \rightarrow X (\geq 4))$. The formula states that in all paths, all flow functions in which the flow at the first vertex in the path is at least $10$, are such that the flow at the second vertex in the path is at least $4$. In order to evaluate the path formula $\Af ((\geq 10) \rightarrow X (\geq 4))$ in a path $\pi$ of a network $N$ we need to know the capacity of all the edges from the source of $N$, and not only the capacity of the first edge in $\pi$. For example, $\varphi$ is satisfied in networks in which the source $s$ has two successors, each connected to $s$ by an edge with capacity $4$, $5$, or $6$. Consider now the  LFL formula $\varphi'=A \Ef (10 \wedge X (\geq 4))$. Note that $\varphi'$ is not equal to the \bfls formula $\theta=\Ef (10 \wedge AX (\geq 4))$. Indeed, in the latter, the same flow function should satisfy the path formula $X (\geq 4)$ in all paths.

\vspace{3mm}
\noindent
{\bf No nesting of flow quantifiers.}
The logic \bflso contains formulas that are Boolean combinations of formulas of the form $\Ef\varphi$ and $\Af \varphi$, for a Flow-\ctls formula $\varphi$. Of special interest are the following  fragments of \bflso:
\begin{itemize}
\item
\ebflso and \abflso, where formulas are of the form  $\Ef\varphi$ and $\Af \varphi$, respectively, for a Flow-\ctls formula $\varphi$.
\item
\elflo and \alflo, where formulas are of the form $\Ef A\psi$ and $\Af  A\psi$, respectively, for a Flow-LTL formula $\psi$, and \lflo, where a formula is a Boolean combination of \elflo and \alflo formulas.
\end{itemize}

\vspace{3mm}
\noindent
{\bf Conjunctive-\bfls.}
The fragment Conjunctive-\bfls (\cbfls, for short) contains \bfls formulas whose flow state sub-formulas restrict the quantified flow in a conjunctive way. That is, when we ``prune'' a \cbfls formula into requirements on the network, atomic flow propositions are only conjunctively related. This would have a computational significance in solving the model-checking problem.

Consider an \ebflso formula $\varphi = \Ef \theta$. We say that an operator $g \in \{\vee,\wedge,E,A,X,F,G,U\}$ has a \emph{positive polarity} in $\varphi$ if all the occurrences of $g$ in $\theta$ are in a scope of an even number of negations. Dually, $g$ has a \emph{negative polarity} in $\varphi$ if all its occurrences in $\theta$ are in a scope of an odd number of negations.

In order to define \cbfls, let us first define the fragments \ecbflso of \ebflso and  \acbflso of \abflso, which constitute the inner level of \cbfls.
The logic \ecbflso is a fragment of \ebflso in which the only operators with a positive polarity are $\wedge$, $A$, $X$, and $G$, and the only operators with a negative polarity are $\vee$, $E$, $X$, and $F$. Note that $U$ is not allowed, as its semantics involves both conjunctions and disjunctions.
\stam{
An exception is Boolean assertions applied to atomic propositions, in which disjunction is allowed.
} 
Note that by pushing negations inside, we make all operators
\stam{
(except these applied to atomic propositions)
}
of a positive polarity; that is, we are left only with $\wedge$, $A$, $X$, and $G$.
Then, since all requirements are universal and conjunctively related, we can push conjunctions outside so that path formulas do not have internal conjunctions ---~for example, transform $AX(\xi_1 \wedge \xi_2)$ into $AX \xi_1 \wedge AX \xi_2$, and can get rid of universal path quantification that is nested inside another universal path quantification --- for example, transform $AXAX\xi_1$ into $AXX\xi$. Finally, since we use the strong semantics to $X$, we can replace formulas that have $X$ nested inside $G$ by $\false$.
The logic \acbflso is the dual fragment of \abflso. In other words, $\Af \theta$ is in \acbflso iff $\Ef \neg \theta$ is in \ecbflso.

Intuitively, the logic \cbfls is obtained by going up a hierarchy in which formulas of lower levels serve as atomic propositions in higher levels. We now define the syntax of \cbfls formally. For simplicity, we define it in a normal form, obtained by applying the rules described above.
A \cbflsz formula is a Boolean assertion over $AP$.
For $i \geq 0$, a \cbflsipo formula is a Boolean assertion over \cbflsi formulas and formulas of the form
$\Ef (A\psi_1 \wedge \cdots \wedge A\psi_n)$, where $\psi_j$ is of the form $X^{k_j}\xi_j$ or $X^{k_j}G\xi_j$, where $k_j \geq 0$ and $\xi_j$ is a \cbflsi formula or a flow proposition (that is, $> \gamma$, $< \gamma$, $\geq \gamma$, or $\leq \gamma$, for an integer $\gamma \in \N$).
Then, a  \cbfls formula is a \cbflsi formula for some $i \geq 0$.
Note that both \ecbflso and \acbflso are contained in \cbflso.

\stam{ 
Thus, in a \ecbflso formula is in a positive normal form, negation is applied only in assertions over atomic propositions and over flow propositions, and the only operators in it are $\wedge$, $A$, $X$, and $G$. The logic \acbflso is the dual fragment of \abflso. In other words, $\Af \theta$ is in \acbflso iff $\Ef \neg \theta$ is in \ecbflso.

 Note that the semantics of $U$ involves both disjunctions and conjunctions, and thus occurrences of

We proceed up an hierarchy and define \ecbflsipo as the set of \bfls formulas that maintain the polarity requirements and in which the outermost flow state sub-formulas are \ecbflsi or \acbflsi formulas.

Formally, \cbflsipo in positive normal form has the following syntax.
First, a \cbflsz formula is a Boolean assertion (possibly with $\vee$) of over $AP$ or a flow proposition $> \gamma$ or $\geq \gamma$, for an integer $\gamma \in \N$.
A \cbflsipo state formula is one of the following.
\begin{description}
\item[(S1)]
  $\varphi$ or $\neg \varphi$, for a \cbflsi formula $\varphi$.
\item[(S3)]
  $\varphi_1\wedge\varphi_2$, for \cbflsipo state formulas $\varphi_1$ and $\varphi_2$.
\item[(S4)]
  $A\psi$, for a \cbflsipo path formula $\psi$.
  \item[(S5)]
  $\Ef \varphi$, for a \cbflsipo state formula $\varphi$.
\end{description}
A \cbflsipo path formula is one of the following:
\begin{description}
\item[(P1)]
  $\varphi$ or $\neg \varphi$, for a \cbflsi formula $\varphi$.
\item[(P2)]
  $\psi_1\wedge\psi_2$, for \cbflsipo path formulas $\psi_1$ and $\psi_2$.
\item[(P3)]
  $X\psi$ or $G\psi$, for a \cbflsipo path formula $\psi$.
\end{description}
}

\begin{exa}
Recall the job-assignment problem from Example~\ref{tw example}. The \cbflso formula
$\Af(<10 \rightarrow EX \leq 0) \wedge \Ef(15 \wedge AX \geq 1)$
states that if less than $10$ jobs are processed, then at least one worker is unemployed, but it is possible to process $15$ jobs and let every worker process at least one job.
\end{exa}

\vspace{3mm}
\noindent
{\bf \bfl.}
The logic \bfl is the fragment of \bfls in which every modal operator ($X,U$) is immediately preceded by a path quantifier.
That is, it is the flow counterpart of CTL\@.
As we are going to show, while in temporal logic, the transition from \ctls to CTL significantly reduces model-checking complexity, this is not the case for \bfl.

\begin{rem}%
\label{remark ep}
As demonstrated in the examples, the extensions of \bfls simplifies the specification of natural properties. We do not, however, study the expressive power of the extensions and fragments. For some, results about expressive power follow from known results about \ctls. For example, as proven in~\cite{KPV12}, adding past to \ctls with a branching-past semantics strictly increases its expressive power. The same arguments can be used in order to show that  \bfls with past operators is strictly more expressive than \bfls.\footnote{We note that the result from~\cite{KPV12} does not immediately imply the analogous result for \bfls, as it might be the case that the flow operators somehow cover up for the extra power of the branching-past temporal operator. In particular, the proof in~\cite{KPV12} is based on the fact that only \ctls with past operators is sensitive to unwinding and, as we prove in Section~\ref{unwinding}, \bfls is sensitive to unwinding. Still, since the specific formula used in~\cite{KPV12} in order to prove the sensitivity of \ctls with past to unwinding does not refer to flow, it is easy to see that it has no equivalent \bfls formula.} Likewise, it is easy to show that \bfls is strictly more expressive than \bfl.
On the other hand, it is not hard to see that quantification over the maximum flow can be expressed by first-order quantification on flow values. Indeed, $\Ef^\mathit{max} \psi$ is equivalent to
$\exists x \Ef ((=x) \wedge \psi \wedge \Af(\leq x))$. Likewise,
it may well be that every  \bfls formula with non-integral flow quantifiers has an equivalent one without such quantifiers, or that 
positive path quantification
can be expressed by first-order quantification on flow values.
We leave the study of the expressive power and relative succinctness of the various logics to future research.
\end{rem}

\section{Model Checking}%
\label{sec mc}
In this section we study the model-checking problem for \bfls. The problem is to decide, given a flow network $N$ and a \bfls formula $\varphi$, whether $N\models \varphi$.
\begin{thm}%
\label{pspace mc}
\bfls model checking is PSPACE-complete.
\end{thm}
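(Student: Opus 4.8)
The plan is to prove the two directions separately. For the lower bound I would reduce from model checking of LTL (equivalently, the single-path-quantifier fragment of \ctls) over finite computations, which is PSPACE-hard. The key observation is that \bfls syntactically subsumes this temporal reasoning: a closed \bfls formula of the form $A\psi$ in which $\psi$ uses no flow propositions (and hence no flow quantifiers) is precisely an LTL formula interpreted over the finite target-paths of the network, and such a formula is trivially closed. By Remark~\ref{fin paths}, the standard PSPACE-hardness construction for LTL, which encodes space-bounded Turing-machine computations, carries over to the finite-path semantics used here. To obtain the sharper statement promised in the introduction --- hardness already for \lfl over unlabeled networks --- I would replace atomic propositions by flow propositions: one fixes a flow function realizing a prescribed value at each relevant vertex, after which flow propositions such as $(=0)$ and $(\geq 1)$ play the role of atomic propositions in the same Turing-machine encoding.

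For the upper bound I would give a deterministic, bottom-up recursive procedure $\mathrm{Check}(v,f,\varphi)$ that decides $v,f\models\varphi$ in polynomial space, following the classical \ctls scheme but with two new cases. The crucial preliminary observation is that flow functions are \emph{polynomially representable}: since $f(e)\le c(e)$ and the capacities are part of the input, each $f(e)$ is an integer with $O(\log c(e))$ bits, so an entire flow function occupies polynomial space, its legality (the capacity bounds and the conservation constraints) is checkable in polynomial time, and each value $f(v)=\sum_{e\in E^v}f(e)$ is computable in polynomial time. Consequently atomic and flow propositions (cases S1 and S2) are evaluated directly and the Boolean cases (S3) simply recurse.

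The two interesting cases are the quantifiers. For a flow quantifier $\Af\varphi'$ (case S5) I would \emph{enumerate} all legal flow functions $f'$ one at a time: there are exponentially many, but a single $f'$ fits in polynomial space, so they can be generated by a mixed-radix counter that reuses its space, and the procedure returns \true\ iff every $f'$ passes the recursive test $\mathrm{Check}(v,f',\varphi')$. For a path quantifier $A\psi$ (case S4) I would treat the maximal state subformulas of $\psi$ as atomic propositions, computing their truth at each vertex under the current flow $f$ by recursive calls to $\mathrm{Check}$, and then decide whether \emph{all} finite target $v$-paths satisfy the resulting LTL-style formula $\psi$. This finite-path query is dual to nonemptiness of the product of $N$ with a nondeterministic automaton for $\neg\psi$; it is solvable on the fly in nondeterministic polynomial space by guessing a target $v$-path while tracking the polynomially-sized automaton state, and since $\mathrm{NPSPACE}=\mathrm{PSPACE}$ (Savitch) this yields a deterministic polynomial-space Boolean subroutine.

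The main obstacle, and the point at which flow logic departs from \ctls, is arguing that interleaving flow quantification with path quantification does not escape PSPACE. The resolution is that $\mathrm{Check}$ is a \emph{deterministic} recursive procedure whose recursion depth is bounded by $|\varphi|$ and each of whose stack frames --- storing the current flow function, the enumeration counter, and the on-the-fly automaton state of the LTL subroutine --- uses only polynomial space; because the space of each loop iteration and of each completed subcall is reused, the total space is $|\varphi|$ times a polynomial, hence polynomial. Thus no nesting of flow and path quantifiers takes us beyond PSPACE, matching the lower bound and establishing PSPACE-completeness.
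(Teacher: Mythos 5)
Your proposal is correct and follows essentially the same route as the paper: the upper bound rests on the same two observations --- that a flow function is polynomially representable (so quantifying over flow functions stays within PSPACE, whether by your deterministic re-using enumeration or by the paper's nondeterministic guess plus Savitch) and that the remaining temporal reasoning is ordinary \ctls/LTL model checking over finite paths --- while the lower bound is inherited from \ctls/LTL exactly as in the paper. The differences (top-down recursion versus the paper's bottom-up relabeling of the network with fresh atomic propositions for closed flow state subformulas) are organizational rather than substantive.
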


\begin{proof}
Consider a network $N$ and a \bfls formula $\varphi$.  The idea behind our model-checking procedure is similar to the one that recursively employs LTL model checking in the process of \ctls model checking~\cite{EL87}. Here, however, the setting is more complicated. Indeed, the path formulas in \bfls are not ``purely linear'', as the flow quantification in them refers to flow in the (branching) network. In addition, while the search for witness paths is restricted to paths in the network, which can be guessed on-the-fly in the case of LTL, here we also search for witness flow functions, which have to be guessed in a global manner.

Let $\{\varphi_1,\ldots,\varphi_k\}$ be the set of flow state formulas in $\varphi$. Assume that $\varphi_1,\ldots,\varphi_k$ are ordered so that for all $1 \leq i \leq k$, all the subformulas of $\varphi_i$ have indices in $\{1,\ldots,i\}$. Our model-checking procedure labels $N$ by new atomic propositions $q_1,\ldots,q_k$ so that for all vertices $v$ and $1 \leq i \leq k$, we have that $v \models q_i$ iff $v \models \varphi_i$ (note that since $\varphi_i$ is closed, satisfaction is independent of a flow function).

Starting with $i=1$, we model check $\varphi_i$, label $N$ with $q_i$, and replace the subformula $\varphi_i$ in $\varphi$ by $q_i$. Accordingly, when we handle $\varphi_i$, it is an \ebflso or a \abflso formula. That is, it is of the form  $\Ef\xi$ or $\Af \xi$, for a Flow-\ctls formula $\xi$. Assume that $\varphi_i = \Ef\xi$. We guess a flow function $f:E \rightarrow \N$, and perform \ctls model-checking on $\xi$, evaluating the flow propositions in $\xi$ according to $f$. Since guessing $f$ requires polynomial space, and \ctls model checking is in PSPACE, so is handling of $\varphi_i$ and of all the subformulas.

Hardness in PSPACE follows from the hardness of \ctls model checking~\cite{SC85}.
\end{proof}

Since \bfls contains \ctls, the lower bound in Theorem~\ref{pspace mc} is immediate. One may wonder whether reasoning about flow networks without atomic propositions, namely when we specify properties of flow only, is simpler. Theorem~\ref{pspace no ap} below shows that this is not the case. Essentially, the proof follows from our ability to encode assignments to atomic propositions by values of flow.

\begin{thm}%
\label{pspace no ap}
\bfls model checking is PSPACE-complete already for \alflo formulas without atomic propositions.
\end{thm}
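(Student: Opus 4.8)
The plan is to treat the upper and lower bounds separately. Membership in PSPACE is immediate: an \alflo formula without atomic propositions is a (closed) fragment of \bfls, so by Theorem~\ref{pspace mc} its model-checking problem is in PSPACE. The content of the theorem is therefore the matching lower bound, which I would obtain by a reduction from the \emph{validity problem of LTL over finite words}. That problem is PSPACE-complete~\cite{SC85}, and by Remark~\ref{fin paths} the underlying Turing-machine encodings already use finite runs; moreover, serializing the configuration bits into consecutive one-letter positions lets us assume a single atomic proposition $p$. The guiding idea, as hinted above, is to encode ``$p$ holds'' by \emph{positivity of the flow}, i.e.\ by the flow proposition $(>0)$, and to let the universal flow and path quantifiers of a formula $\Af A(\cdot)$ jointly range over all finite binary words.

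For the reduction I would use a \emph{fixed} four-vertex network $N$ with source $s$, target $t$, and two internal ``bit'' vertices $a$ and $b$, with edges $s\to a$, $s\to b$, $s\to t$, $a\to b$, $b\to a$, $a\to t$, $b\to t$ and the self-loops $a\to a$, $b\to b$, all of capacity $1$. A target $s$-path then has the form $s,x_1,\ldots,x_k,t$ with each $x_i\in\{a,b\}$ arbitrary, so paths range exactly over $\{a,b\}$-walks. Given an LTL formula $\psi_0$ over the single proposition $p$, I output the \alflo formula $\Af A\psi'$ with $\psi'=X\big(\psi_0[p\leftarrow (>0)]\big)$, where the leading $X$ discards the boundary read at $s$; note $\psi'$ uses only flow propositions and hence $\Af A\psi'$ contains no atomic propositions. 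The reduction is clearly polynomial (the network is of constant size, and the formula is linear in $\psi_0$ up to the serialization factor).

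The heart of the argument is to show that $N\models \Af A\psi'$ iff $\psi_0$ is valid over finite words. Along a path $s,x_1,\ldots,x_k,t$ the formula reads the sequence $f(x_1),\ldots,f(x_k),f(t)$, which under $(>0)$ becomes a \emph{binary} word; so for every flow $f$ the collection of words read over all paths is a sub-collection of $\{0,1\}^+$. Since $\forall f\,\forall\pi$ is a conjunction, $N\models\Af A\psi'$ holds iff \emph{every word in the union, over all flows, of these collections} satisfies $\psi_0$. It then remains to check that this union is all of $\{0,1\}^+$: routing one unit of flow around the self-loop of $a$ is a legal flow $f^\ast$ with $f^\ast(a)=1$ and $f^\ast(b)=0$, so reading $a$ gives bit $1$ and reading $b$ gives bit $0$, and choosing the $x_i$ accordingly realizes any prescribed data prefix, while sending or not sending a unit on $s\to t$ realizes either value of the trailing letter $f(t)$. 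Hence the union equals $\{0,1\}^+$, and $N\models\Af A\psi'$ iff $\psi_0$ is valid, giving PSPACE-hardness.

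The step I expect to be the main obstacle is the correct handling of the \emph{universal} flow quantifier. Unlike an existential quantifier, $\Af$ cannot ``install'' a convenient labeling of the vertices: the all-zero flow is always legal (capacities are only upper bounds), and so are flows assigning positive values to both $a$ and $b$, and all of these impose extra constraints. The observation that rescues the reduction is that such flows contribute only \emph{sub-collections} of binary words (e.g.\ the all-zero word, or the all-one word), whose satisfaction of $\psi_0$ is already implied by validity; thus quantifying over all flows is exactly equivalent to quantifying over all words. The remaining work is routine: verifying feasibility of the few flows used in the gadget, noting that every flow trivially induces a binary word, and discharging the boundary reads at $s$ and $t$ via the leading $X$ together with the finite-word end semantics matched to the strong-$X$ convention of Remark~\ref{fin paths}. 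If one prefers to avoid serialization, the $n$ propositions of the original encoding are handled by traversing $n$ independent bit-gadgets, each with its own self-loop, once per macro-step.
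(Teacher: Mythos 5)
Your proof is correct and follows essentially the same route as the paper's: both encode the truth value of the single proposition $p$ by the positivity of the flow at one of two interchangeable ``bit'' vertices that the path may visit at each step, and both reduce from a PSPACE-hard finite-trace LTL problem over one proposition (you use validity and model-check $\Af A\psi'$ directly; the paper uses satisfiability and phrases the target as the failure of an \alflo formula $\Af A(1 \rightarrow X\neg\psi')$, which is the same reduction up to dualization). The only substantive difference is that the paper tames the flow quantifier with the guard $(1 \rightarrow \cdots)$ on the source, forcing $f(v_p)+f(v_{\neg p})=1$, whereas you observe that under the doubly universal quantification no guard is needed since spurious flows only contribute binary words already covered by validity; both resolutions are sound (your self-loop circulation is a legal flow under the paper's definition).
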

\begin{proof}
We describe a reduction from LTL satisfiability, which is PSPACE-hard~\cite{SC85}. As discussed in Remark~\ref{fin paths}, PSPACE-hardness applies already to LTL over finite computations.

Consider an LTL formula $\psi$. For simplicity, we assume that $\psi$ is over a single atomic proposition $p$. Indeed, LTL satisfiability is PSPACE-hard already for this fragment. We construct a flow network $N$ and an \alflo formula $\varphi$ such that $N$ does not satisfy $\varphi$  iff some computation in ${\{p,\neg p\}}^*$ satisfies $\psi$. The network $N$ appears in Figure~\ref{figure PSPACE no AP}. Note that when $s$ has a flow of $1$, then exactly one of $v_p$ and $v_{\neg p}$ has a flow of $1$.
Let $\psi'$ be the Flow-LTL formula obtained from $\psi$ be replacing all occurrences of $p$ by $\geq 1$.
We define $\varphi= \Af A(1 \rightarrow X \neg \psi')$. Thus, $\neg \varphi= \Ef E(1 \wedge X\psi')$.

\begin{figure}[ht]
\begin{center}
\includegraphics[scale=0.3]{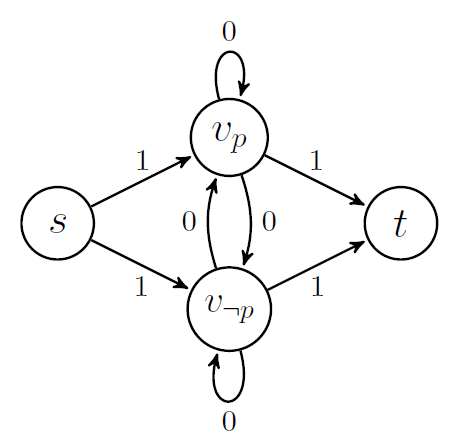}
\caption{\normalsize The flow network $N$.}%
\label{figure PSPACE no AP}
\end{center}
\end{figure}

We prove that $N$ satisfies $\Ef E(1 \wedge X\psi')$ iff some computation in ${\{p,\neg p\}}^*$ satisfies $\psi$.
Assume first that some computation $\pi$ in ${\{p,\neg p\}}^*$ satisfies $\psi$. Consider a flow $f$ in $N$ such that $f(v_p)=1$ and $f(v_{\neg p})=0$, and consider the path in $N$ that encodes $\pi$ (starting from the second vertex in the path). For these flow and path, the formula $X\psi'$ holds, and therefore $N \models \Ef E(1 \wedge X\psi')$.
For the other direction, assume that $N \models \Ef E(1 \wedge X\psi')$. Let $f$ and $\pi$ be the flow function and the path in $N$ that witness the satisfaction of $X\psi'$. The structure of $N$ guarantees that $f(v_p)+f(v_{\neg p})=1$. Hence, $\pi$ with $f$ encodes a computation in ${\{p,\neg p\}}^*$ that satisfies $\psi$, and we are done. Note that we can avoid using edges with capacities $0$ by changing $\neg \varphi$ to $\Ef E(1 \wedge X0 \wedge XX\psi')$.
\end{proof}

\stam{
\begin{figure}[t]
\begin{minipage}[b]{0.4\linewidth}
\begin{center}
\includegraphics[scale=0.3]{PSPACE_no_AP.png}
\caption{\normalsize The flow network $N$.}%
\label{figure PSPACE no AP}
\end{center}
\end{minipage}
\quad
\begin{minipage}[b]{0.55\linewidth}
\begin{center}
\includegraphics[scale=0.24]{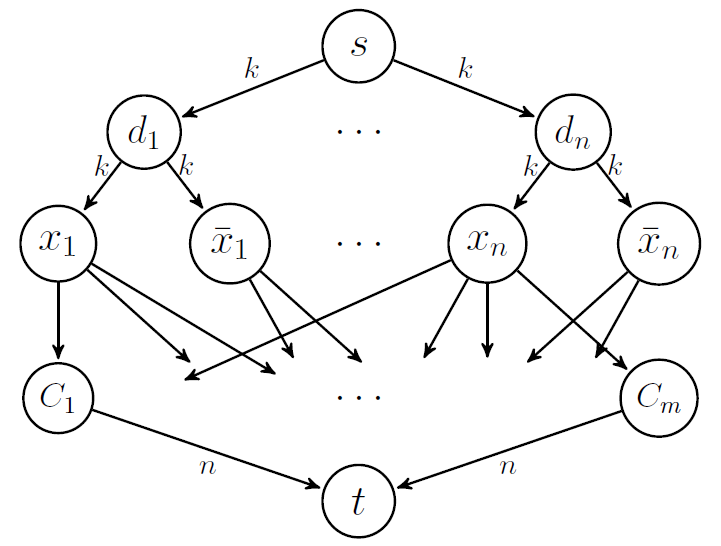}
\caption{\normalsize The flow network $N$. The capacity of each of the edges entering $C_1,\ldots,C_m$ is $1$.}%
\label{figure NP reduction}
\end{center}
\end{minipage}
\end{figure}
}
\stam{
\begin{figure}[ht]
\begin{center}
\includegraphics[scale=0.3]{PSPACE_no_AP.png}
\caption{\normalsize The flow network $N$.}%
\label{figure PSPACE no AP}
\end{center}
\end{figure}
}
We also note that when the given formula is in \bfl, we cannot avoid the need to guess a flow function, yet once the flow function is guessed, we can verify it in polynomial time. Accordingly, the model-checking problem for \bfl is in ${\rm P}^{\rm NP}$. We discuss this point further below.

In practice, a network is typically much bigger than its specification, and its size is the computational bottleneck. In temporal-logic model checking, researchers have analyzed the \emph{system complexity} of model-checking algorithms, namely the complexity in terms of the system, assuming the specification is of a fixed length. There, the system complexity of LTL and \ctls is NLOGSPACE-complete~\cite{LP85,KVW00}. We prove that, unfortunately, this is not the case of \bfls.  That is, we prove that while the \emph{network complexity} of the model-checking problem, namely the complexity in terms of the network, does not reach PSPACE, it seems to require polynomially many calls to an NP oracle. Essentially, each evaluation of a flow quantifier requires such a call.
Formally, we have the following.

\begin{thm}%
\label{Delta2 upper bound}
The network complexity of \bfls is in $\Delta_2^P$ (i.e., in ${\rm P}^{\rm NP}$).
\end{thm}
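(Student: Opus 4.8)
The plan is to refine the PSPACE procedure of Theorem~\ref{pspace mc} so that, for a \emph{fixed} formula $\varphi$, the only super-polynomial ingredient---the guessing of a flow function---is isolated into NP oracle calls, while all the deterministic bookkeeping stays polynomial in the network. Concretely, I would keep the bottom-up treatment of the flow state subformulas $\varphi_1,\ldots,\varphi_k$ of $\varphi$, exploiting that since $\varphi$ is fixed, $k$ is a constant. After the inner flow state subformulas of $\varphi_i$ have been replaced by fresh propositions $q_1,\ldots,q_{i-1}$ labelling $N$, each $\varphi_i$ is an \ebflso or \abflso formula $\Ef\xi$ (resp.\ $\Af\xi$), where $\xi$ is a Flow-\ctls formula over $AP \cup \{q_1,\ldots,q_{i-1}\}$. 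The deterministic machine will, for $i=1,\ldots,k$ and for every vertex $v\in V$, decide whether $v\models\varphi_i$ with a single oracle query, and then write the truth value into the label $q_i(v)$.

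The heart of the argument is that each such query is an NP query. For $\varphi_i=\Ef\xi$, the query ``$v\models\Ef\xi$'' asks whether there is a legal flow function $f$ with $v,f\models\xi$. I would verify that a flow function has a polynomial-size certificate: each $f(e)$ satisfies $f(e)\le c(e)$ and is therefore representable within the bit-length of the input capacity $c(e)$, and the legality constraints (capacity bounds and conservation at internal vertices) are checkable in polynomial time. Given a guessed $f$, deciding $v,f\models\xi$ is polynomial: the constantly many flow propositions of $\xi$ become ordinary vertex labels once we compute each $f(v)=\sum_{e\in E^v}f(e)$ and compare it with the relevant constants, after which $\xi$ is a fixed Flow-\ctls (hence \ctls) formula, whose model checking over the finite target paths of $N$ is polynomial in the network (its system complexity being NLOGSPACE, cf.~\cite{LP85,KVW00}). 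Thus ``$v\models\Ef\xi$'' is in NP; a universal subformula $\Af\xi$ is handled by the complementary query ``$v\models\Ef\neg\xi$'', since $\Af\xi=\neg\Ef\neg\xi$.

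Putting this together, the machine performs $k\cdot|V|$ oracle calls in total---polynomially many, as $k$ is fixed---interleaved with polynomial deterministic work (computing the $f(v)$'s for verification inside each query, and updating the labelling between queries). The calls are adaptive, since the labels $q_1,\ldots,q_{i-1}$ produced by earlier queries determine the formula $\xi$ used in the queries for $\varphi_i$; this adaptivity is precisely what places the problem in $\Delta_2^P=\mathrm{P}^{\mathrm{NP}}$ rather than in a single level of the polynomial hierarchy. Finally, $N\models\varphi$ is read off from the top-level labelling at the source $s$.

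The step I expect to be the main obstacle is the careful justification that the witness flow function is polynomially representable and that its legality and the ensuing Flow-\ctls check are genuinely polynomial when capacities are encoded in binary---so that the guess-and-verify procedure is in NP rather than merely in PSPACE. The secondary point that needs care is organizing the bottom-up labelling so that each oracle query contains only a single, non-alternating flow quantifier; this is what prevents the nesting of flow quantifiers in $\varphi$ from pushing the query complexity above NP and keeps the overall complexity at $\Delta_2^P$.
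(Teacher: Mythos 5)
Your proposal is correct and follows essentially the same route as the paper: fix the formula so that $k$ and each $\varphi_i$ are of constant size, evaluate the flow state subformulas bottom-up by labelling vertices with fresh propositions, and observe that each evaluation of $\Ef\xi$ is an NP query (guess a polynomially-representable legal flow, then model check a fixed-size Flow-\ctls formula in polynomial time), with $\Af\xi$ handled by complementation. The paper states this more tersely, but the decomposition, the oracle calls, and the reason each call lands in NP are the same.
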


\begin{proof}
Fixing the length of the formula in the algorithm described in the proof of Theorem~\ref{pspace mc}, we get that $k$ is fixed, and so is the length of each subformula $\varphi_i$. Thus, evaluation of $\varphi_i$ involves a guess of a flow and then model checking of a fixed size Flow-\ctls formula, which can be done in time polynomial in the size of the network. Hence, the algorithm from Theorem~\ref{pspace mc} combined with an NP oracle gives the required network complexity.
\end{proof}

\stam{
The complexity class BH is based on a Boolean hierarchy over NP\@. Essentially, it is the smallest class that contains NP and is closed under union, intersection, and complement. The levels of the hierarchy start with BH$_1=$ NP, and each level adds internal intersections as well as intersection with a co-NP (even levels) or an NP (odd levels) language~\cite{Wec85}.
BH is contained in $\Delta_2^P$. It is not hard to prove that the network complexity of the fragment of \bflso that contains at most $k$ flow quantifiers is in BH$_{k+1} \cap$ co-BH$_{k+1}$. Indeed, the latter contain  problems that are decidable in polynomial time with $k$ parallel queries to an NP oracle~\cite{Bei91}. A BH$_k$ lower bound can also be shown.
}
While finding the exact network complexity of model checking \bfls and its fragments is interesting from a complexity-theoretical point of view, it does not contribute much to our story. Here, we prove NP and co-NP hardness holds already for very restricted fragments. As good news, in Section~\ref{poly frag} we point that for the conjunctive fragment, model checking can be performed in polynomial time.\footnote{
A possible tightening of our analysis is via the complexity class BH, which is based on a Boolean hierarchy over NP\@. Essentially, it is the smallest class that contains NP and is closed under union, intersection, and complement. The levels of the hierarchy start with BH$_1=$ NP, and each level adds internal intersections as well as intersection with a co-NP (even levels) or an NP (odd levels) language~\cite{Wec85}.
BH is contained in $\Delta_2^P$. It is not hard to prove that the network complexity of the fragment of \bflso that contains at most $k$ flow quantifiers is in BH$_{k+1} \cap$ co-BH$_{k+1}$. Indeed, the latter contain  problems that are decidable in polynomial time with $k$ parallel queries to an NP oracle~\cite{Bei91}. A BH$_k$ lower bound can also be shown.}

We first show a simple lemma that is needed for the network-complexity hardness results.

\begin{lem}%
\label{lem:CNF}
A CNF formula $\psi$ can be translated to a CNF formula $\psi'$ in polynomial time, such that $\psi$ is satisfiable iff $\psi'$ is satisfiable, and all literals in $\psi'$ appear exactly the same number of times in $\psi'$.
\end{lem}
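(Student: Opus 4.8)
The plan is to obtain $\psi'$ from $\psi$ by appending \emph{tautological padding clauses} — clauses that contain some complementary pair $w \vee \neg w$ and are therefore satisfied by every assignment. Since adding a valid clause never changes the set of satisfying assignments, $\psi'$ will be equisatisfiable with $\psi$ regardless of how we pad, so the entire difficulty is purely combinatorial: to choose the padding so that every literal ends up with the same number of occurrences.

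Let $n$ be the number of variables occurring in $\psi$, and for a literal $\ell$ let $c(\ell)$ be its number of occurrences in $\psi$; write $T=\sum_\ell c(\ell)$ for the total number of literal occurrences, the sum ranging over all $2n$ literals. I would set the common target count to be exactly $M^*=T$. Since $T\ge c(\ell)$ for every $\ell$, the deficit $d_\ell=T-c(\ell)$ is nonnegative. To supply these deficits I would use padding clauses of the shape $(\ell \vee w \vee \neg w)$: each is a tautology, contributes one extra occurrence of the target literal $\ell$, and one occurrence each to the filler literals $w$ and $\neg w$. Creating $d_\ell$ such clauses for every literal $\ell$ raises each original literal to exactly $c(\ell)+d_\ell=T$ occurrences.

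The key step — and the reason for the careful choice $M^*=T$ — is to make the \emph{filler} variables themselves balanced and to bring them up to the same count $T$, a self-referential requirement since the fillers are introduced precisely by the padding. The total number of padding clauses is $\sum_\ell d_\ell = \sum_\ell (T-c(\ell)) = 2nT - T = (2n-1)T$. I would therefore introduce exactly $2n-1$ fresh filler variables $w_1,\dots,w_{2n-1}$ and distribute the $(2n-1)T$ padding clauses into $2n-1$ blocks of size $T$, using $w_j$ as the filler in the $j$-th block. Because $w_j$ and $\neg w_j$ each occur once per clause in which $w_j$ is the filler, each of them then occurs exactly $T$ times; the clean divisibility $(2n-1)T/(2n-1)=T$ is exactly what the choice $M^*=T$ buys us, and is the crux of the argument.

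Finally I would verify the two bookkeeping points. Correctness of counts: every original literal occurs $T$ times and every filler literal occurs $T$ times, so all literals of $\psi'$ occur equally often. Polynomiality: there are $(2n-1)T$ padding clauses, each of constant width, and both $n$ and $T$ are bounded by $|\psi|$, so $\psi'$ has size polynomial in $|\psi|$ and is produced in polynomial time. The only degenerate case is $T=0$, an empty clause set with no literals at all, where one simply takes $\psi'=\psi$. The main obstacle I anticipate is precisely the circularity in the third paragraph — equalizing the counts of the auxiliary variables that the equalization itself creates — and its resolution is the global, divisibility-friendly choice of target count $M^*=T$.
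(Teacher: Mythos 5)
Your construction is correct, and it is genuinely different from the paper's. The paper stays within the original variable set and works in three sequential stages: it first equalizes $|x_1|$ and $|\neg x_1|$ by appending tautological clauses of the form $(x_1 \vee x_2 \vee \neg x_2)$ (or the dual), then equalizes each remaining variable's two polarities using $(x_i \vee x_1 \vee \neg x_1)$ as the filler --- which preserves the already-established balance of $x_1$ since it adds $x_1$ and $\neg x_1$ in pairs --- and finally lifts every variable to the maximum count with binary tautologies $(x_i \vee \neg x_i)$; the common target count is whatever maximum emerges after the second stage. You instead fix the target $T$ globally in advance, pad every literal independently with fresh filler variables, and resolve the self-referential problem of balancing the fillers by the divisibility identity $\sum_\ell (T - c(\ell)) = (2n-1)T$, which lets you split the padding into $2n-1$ blocks of exactly $T$ clauses, one per fresh variable. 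What each buys: the paper's version introduces no new variables and so keeps the downstream flow network slightly smaller, while yours is more uniform, avoids the interaction argument between stages, and incidentally handles the one-variable case that the paper's first step (which needs a second variable $x_2$ as filler) technically does not cover. Two small points to align with the paper's use of the lemma: introducing fresh variables is harmless for the NP-hardness reduction, since the lemma only promises \emph{some} equisatisfiable CNF with balanced literal counts; and you should state, as the paper does, the standing convention that no literal occurs twice in a clause, so that ``same number of occurrences'' coincides with ``appears in the same number of clauses,'' which is the form actually invoked in the reduction.
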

\begin{proof}
We consider CNF formulas in which a literal is not allowed to appear more than once in each clause.
Let $|y|_{\psi}$ denote the number of occurrences of the literal $y$ in $\psi$.
Let $x_1$ and $x_2$ be two variables in $\psi$.
We add some `{\sf true}' clauses to $\psi$ to obtain $\psi'$.
First we add clauses of the form ($x_1 \vee x_2 \vee \neg x_2$) for $|\neg x_1|_{\psi}-|x_1|_{\psi}$ times if $|\neg x_1|_{\psi}>|x_1|_{\psi}$, else we add clauses of the form ($\neg x_1 \vee x_2 \vee \neg x_2$) for $|x_1|_{\psi} - |\neg x_1|_{\psi}$ times.
Thus we obtain  a formula $\psi_1$ with the same number of occurrences for the literals $x_1$ and $\neg x_1$.

Now for every variable $x_i \neq x_1$, we add clauses ($x_i \vee x_1 \vee \neg x_1$) if the number of occurrences of $\neg x_i$ in $\psi_1$ is more than that of $x_i$, otherwise we add clauses ($\neg x_i \vee x_1 \vee \neg x_1$), possibly multiple times, until the number of occurrences of $x_i$ and the number of occurrences of $\neg x_i$ become the same.
Now for each variable $x$, both the literals $x$ and $\neg x$ appear the same number of times.

As a final step, let $x_j$ be a variable such that the number of occurrences of the literals $x_j$ and $\neg x_j$ is the maximum over all the variables.
For every variable $x_i \neq x_j$, we add clauses of the form ($x_i \vee \neg x_i$) until the number of literals for $x_i$ and $x_j$ are the same.
We call the resultant formula $\psi'$.

Note that all the above steps can be done in polynomial time and the size of $\psi'$ is polynomial in the size of $\psi$.
\end{proof}

\begin{thm}%
\label{ebflso NP complete}
The network complexity of \ebflso and \abflso is NP-complete and co-NP-complete, respectively. Hardness applies already to \elflo and \alflo formulas
without atomic propositions, and to BFL\@.
\end{thm}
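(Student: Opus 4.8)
The plan is to establish membership and hardness separately, reusing the guess-and-check idea behind Theorem~\ref{Delta2 upper bound} for the upper bounds and a reduction from propositional satisfiability for the lower bounds.

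For membership, fix an \ebflso formula $\Ef\varphi$, so that $\varphi$ is a fixed Flow-\ctls formula. Given a network $N$, I would nondeterministically guess a flow function $f:E\to\N$; since each $f(e)$ is bounded by $c(e)$, the guess is of size polynomial in $N$. I would then verify $s,f\models\varphi$ by Flow-\ctls model checking, evaluating each (fixed) flow proposition $>\gamma$ or $\geq\gamma$ against $f$. As $\varphi$ is fixed, this runs in time polynomial in $|N|$, so the network complexity of \ebflso is in NP. The class \abflso is handled by complementation: $N\not\models\Af\varphi$ iff $N\models\Ef\neg\varphi$, which is an \ebflso query, so the network complexity of \abflso is in co-NP.

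For NP-hardness I would reduce from $3$-CNF satisfiability. First I would apply Lemma~\ref{lem:CNF} to the input formula $\psi$, obtaining an equisatisfiable CNF $\psi'$ in which every literal occurs the same number $d$ of times; this uniformity is precisely what lets the checking formula be a single, instance-independent formula, as required for a network-complexity bound. I would then build a network $N_{\psi'}$ whose flow encodes a truth assignment: as in the $v_p/v_{\neg p}$ gadget of Theorem~\ref{pspace no ap}, each variable $x_i$ is given a gadget that receives one unit of flow from $s$, so that integrality forces this unit entirely to the ``$x_i$-true'' branch or entirely to the ``$x_i$-false'' branch, yielding a genuine binary and consistent assignment recorded by the flow values ($0$ or $1$) at the literal vertices. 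Clause vertices $C_1,\ldots,C_m$, whose incoming edges all have capacity $1$, sit above the literal vertices. The reduction formula is the fixed \elflo formula $\Ef A\psi_0$, where $\Ef$ selects the assignment-encoding flow and the universal path quantifier, together with the flow propositions in the bounded-depth Flow-LTL formula $\psi_0$, inspects each clause and its (exactly three, by $3$-CNF) literals and asserts that every clause is satisfied by the recorded assignment. I would argue that $N_{\psi'}\models \Ef A\psi_0$ iff some assignment satisfies $\psi'$ iff $\psi$ is satisfiable; since $N_{\psi'}$ uses no atomic propositions, hardness holds for \elflo without atomic propositions, and hence for \ebflso. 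The co-NP-hardness of \alflo (and thus \abflso) follows by the dual reduction: $\Af A\psi_0$ fails exactly when some flow yields an assignment falsifying a clause, so reducing from the complement of satisfiability gives the bound, again with no atomic propositions. Finally, because the clause/assignment inspection in $\psi_0$ uses only bounded nestings of $X$ under a single path quantifier, it can be rephrased with CTL-style modalities ($AX$, $EX$), so the same hardness transfers to \bfl.

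The hard part will be the design of the gadget-plus-formula so that a single formula, ignorant of the particular variables and clauses, nonetheless faithfully verifies an arbitrary instance. Two tensions must be reconciled: integrality is needed to force a clean binary, \emph{consistent} assignment (so that a variable cannot be simultaneously true and false, nor take different values in different clauses), while the universal path quantifier must express ``for every clause there exists a satisfying literal'' even though $A$ is itself universal over the literal choice. My intended resolution is to let the existentially quantified flow carry both the assignment and, through the capacity-$1$ clause edges, a per-clause witness, and to let $\psi_0$ read off witnesses via flow propositions; verifying that this forces exactly the satisfiability condition, and that the uniform occurrence count from Lemma~\ref{lem:CNF} makes $\psi_0$ independent of $\psi$, is the crux of the argument.
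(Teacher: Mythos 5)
Your upper bound and the overall shape of your reduction (CNF-SAT as the source problem, uniformization via Lemma~\ref{lem:CNF}, a layered network source--variables--literals--clauses with capacity-$1$ clause edges, and a formula $\Ef A\psi_0$ that constrains flow values at fixed depths) all match the paper's proof. The gap is in the variable gadget: you route \emph{one} unit of flow into each variable gadget, so the literal vertex chosen as ``true'' carries only one unit. By flow conservation it can then push flow into at most one of its outgoing capacity-$1$ clause edges, even though (after Lemma~\ref{lem:CNF}) it occurs in $d$ distinct clauses. Consequently the condition ``every clause vertex receives flow $\geq 1$'' certifies not satisfiability but the existence of a system of distinct representatives matching each clause to its own true-literal occurrence; whenever some literal is the unique true literal of two different clauses, your network rejects a satisfiable instance. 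This is precisely the tension you flag at the end as ``the crux'' and leave unresolved, so the reduction as described does not go through.

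The paper's resolution is to push $k$ units (not one) through each variable gadget, where $k$ is the uniform occurrence count supplied by Lemma~\ref{lem:CNF}: the path formula is $kn\wedge XX(k\vee 0)\wedge XXX(\geq 1)$, so the source carries $kn$, each literal vertex carries $k$ or $0$ (integrality together with the total $kn$ forces exactly one of $x_i,\bar{x}_i$ to carry $k$), and a true literal then saturates all $k$ of its capacity-$1$ clause edges, so every clause containing a true literal receives flow. With this change the rest of your plan works, including co-NP-hardness for \alflo by dualization and the transfer to BFL via the formula $kn\wedge AXEXk\wedge AXAXAX(\geq 1)$. One further caveat: the constants $kn$ and $k$ in the formula depend on the instance, so the checking formula is not literally ``instance-independent'' as you claim; what the reduction establishes is hardness already for formulas of this fixed, very simple shape.
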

\begin{proof}
For the upper bound, it is easy to see that one step in the algorithm described in the proof of Theorem~\ref{pspace mc} (that is, evaluating $\varphi_i$ once all its flow state subformulas have been evaluated), when applied to $\varphi_i$ of a fixed length is in NP for $\varphi_i$ of the form $\Ef \xi$ and in co-NP for $\varphi_i$ of the form $\Af \xi$.

For the lower bound, we prove NP-hardness for \elflo. Co-NP-hardness for \alflo follows by dualization. We describe a reduction from CNF-SAT\@. Let $\theta=C_1 \wedge \ldots \wedge C_m$ be a CNF formula over the variables $x_1 \ldots x_n$. We assume that every literal in $x_1,\ldots,x_n,\bar{x}_1,\ldots,\bar{x}_n$ appears exactly in $k$ clauses in $\theta$. Indeed, by Lemma~\ref{lem:CNF}, every CNF formula can be converted to such a formula in polynomial time and with a polynomial blowup.

\noindent
\begin{minipage}{0.55\textwidth}
\vspace{1mm}
We construct a flow network $N$ and an \elflo formula $\Ef A\psi$ such that $\theta$ is satisfiable iff $N \models \Ef A \psi$. The network $N$ is constructed as demonstrated on the right. 
Let $Z=\{x_1,\ldots,x_n,\bar{x}_1,\ldots,\bar{x}_n\}$. For a literal $z \in Z$ and a clause $C_i$, the network $N$ contains an edge $\zug{z,C_i}$ iff the clause $C_i$ contains the literal $z$. Thus, each vertex in $Z$ has exactly $k$ outgoing edges. The capacity of each of these edges is $1$. The flow-LTL formula $\psi=kn \wedge XX(k \vee 0) \wedge XXX(\geq 1)$.
We now prove that $\theta$ is satisfiable iff $N \models \Ef A \psi$.\end{minipage} \ \hspace{.2in}
\begin{minipage}{0.40\textwidth}
\includegraphics[scale=0.3]{NP_reduction.png}
\end{minipage}

\stam{
For the lower bound, we prove NP-hardness for \elflo. Co-NP-hardness for \alflo follows by dualization. We describe a reduction from CNF-SAT\@. Let $\theta=C_1 \wedge \ldots \wedge C_m$ be a CNF formula over the variables $x_1 \ldots x_n$. We assume that every literal in $x_1,\ldots,x_n,\bar{x}_1,\ldots,\bar{x}_n$ appears exactly in $k$ clauses in $\theta$. Indeed, every CNF formula can be converted to such a formula in polynomial time and with a polynomial blowup. We construct a flow network $N$ and an \elflo formula $\Ef A\psi$ such that $\theta$ is satisfiable iff $N \models \Ef A \psi$. The network $N$ is constructed as demonstrated in Figure~\ref{figure NP reduction}. Let $Z=\{x_1,\ldots,x_n,\bar{x}_1,\ldots,\bar{x}_n\}$. For a literal $z \in Z$ and a clause $C_i$, the network $N$ contains an edge $\zug{z,C_i}$ iff the clause $C_i$ contains the literal $z$. Thus, each vertex in $Z$ has exactly $k$ outgoing edges. The capacity of each of these edges is $1$. The flow-LTL formula $\psi=kn \wedge XX(k \vee 0) \wedge XXX(\geq 1)$.

In Appendix~\ref{app ebflso NP complete}, we prove that $\theta$ is satisfiable iff $N \models \Ef A \psi$.

\begin{figure}[ht]
\begin{center}
\includegraphics[scale=0.3]{NP_reduction.png}
\caption{\normalsize The flow network $N$. The capacity of each of the edges entering $C_1,\ldots,C_m$ is $1$.}%
\label{figure NP reduction}
\end{center}
\end{figure}
} 

\vspace{1mm}
Assume first that $\theta$ is satisfiable. Consider the flow function $f$ where for every $1 \leq i \leq n$, if $x_i$ holds in the satisfying assignment, then $f(x_i)=k$ and $f(\bar{x}_i)=0$, and otherwise $f(x_i)=0$ and $f(\bar{x}_i)=k$. For such $f$, all the paths in $N$ satisfy $\psi$.

Assume now that there is a flow function $f$ with which $N$ satisfies $A\psi$. For every $1 \leq i \leq n$, either we have $f(x_i)=k$ and $f(\bar{x}_i)=0$ or we have $f(x_i)=0$ and $f(\bar{x}_i)=k$. Consider the assignment $\tau$ to the variables $x_1,\ldots,x_n$ induced by $f$, where a literal $z \in Z$ holds in $\tau$ iff the flow in the corresponding vertex is positive. Since according to $\psi$ for every $i$ the flow in the vertex $C_i$ is positive, then every clause in $\theta$ contains at least one literal whose corresponding vertex has a positive flow. Hence, the assignment $\tau$ satisfies $\theta$.
Finally, note that $\psi$ does not contain atomic propositions. Also, the same proof holds with the BFL formula $\psi=kn \wedge AXEXk  \wedge AXAXAX(\geq 1)$.
\end{proof}

\stam{
\begin{thm}
Let \bflsok be the fragment of \bflso in which at most $k$ flow quantifiers are allowed. The network complexity of \bflsok is in $BH_{k+1} \cap co-BH_{k+1}$ and is $BH_k$-hard and co-$BH_k$-hard. Hence, \bflso is hard for every class in the Boolean hierarchy.
\end{thm}
\begin{proof}
We start with the upper bound. We say that a problem is decidable in polynomial time with $j$ parallel queries to an NP oracle if it can be decided by a polynomial time oracle Turing machine that prepares a list of the $j$ queries it is going to make to the NP oracle before actually making any of them. It is known that the class of problems that are decidable in polynomial time with $j$ parallel queries to an NP oracle is contained in $BH_{j+1} \cap co-BH_{j+1}$~\cite{Bei91}. Hence, it is enough to show that model-checking for a fixed side \bflsok formula can be done in polynomial time with $k$ parallel queries to an NP oracle. Let $N$ be a flow network and let $\psi$ be a \bflsok formula. We assume that $\psi$ contains only existential flow quantifiers. Indeed, every \bfls formula can be translated to such formula. Note that $\psi$ is a propositional combination of $k$ \ebflso sub-formulas. Since according to Theorem~\ref{ebflso NP complete} solving the problem for an \ebflso formula is in NP, we use $k$ parallel queries to the NP oracle in order to solve the problem for each of the \ebflso sub-formulas. Now, checking whether $N \models \psi$ can be done easily in polynomial time.

We now show the lower bound.
\end{proof}
}

\subsection{Flow synthesis}
In the flow-synthesis problem, we are given a network $N$ and an \ebflso formula $\Ef \varphi$, and we have to return a flow function $f$ with which $\varphi$ is satisfied in $N$, or declare that no such function exists. The corresponding decision problem, namely, deciding whether there is a flow function with which $\varphi$ is satisfied in $N$, is clearly at least as hard as \ctls model checking. Also, by guessing $f$, its complexity does not go beyond \ctls model-checking complexity. The network complexity of the problem coincides with that of \ebflso model checking. Thus, we have the following.

\begin{thm}
The flow-synthesis problem for \ebflso is PSPACE-complete, and its network complexity is NP-complete.
\end{thm}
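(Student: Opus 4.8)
The plan is to observe that the \emph{decision} version of flow synthesis for an \ebflso formula $\Ef\varphi$ is literally the model-checking problem for the closed \bfls formula $\Ef\varphi$: asking whether some flow function makes $\varphi$ hold at the source is exactly asking whether $N \models \Ef\varphi$. Consequently all the complexity bounds should follow from Theorems~\ref{pspace mc} and~\ref{ebflso NP complete}, and the only genuinely new work is to argue that \emph{producing} a witness flow is no costlier than deciding its existence.

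For the PSPACE upper bound I would simply invoke Theorem~\ref{pspace mc}: since $\Ef\varphi$ is a closed \bfls formula, deciding $N\models\Ef\varphi$ is in PSPACE. To actually synthesize a flow in polynomial space, I would enumerate all candidate integral flow functions using a polynomial-size counter (each edge carries a value bounded by its capacity, so a legal flow is described by polynomially many numbers of polynomially many bits each), and for every candidate first check legality and then run the PSPACE model-checking procedure on $\varphi$ with flow propositions evaluated according to it, outputting the first witness found; this stays within ${\rm FPSPACE}$. For PSPACE-hardness I would reduce from \ctls model checking, which is PSPACE-hard. Given a pure \ctls formula $\theta$ (containing no flow propositions), satisfaction of $\theta$ is independent of the flow, and since a legal flow always exists (e.g.\ the all-zero flow), we have $\Ef\theta \equiv \theta$ as closed \bfls formulas. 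Thus $N\models\theta$ iff the flow-synthesis instance $(N,\Ef\theta)$ has a solution, giving the lower bound.

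For the network complexity I would fix the formula, so that $\varphi$ has constant length. The upper bound is exactly the one established inside Theorem~\ref{ebflso NP complete}: guess an integral flow $f$ of polynomial size in $N$ and model-check the fixed-size Flow-\ctls formula $\varphi$ in time polynomial in $N$, which is in NP. To output an actual flow within ${\rm P}^{\rm NP}$ I would use self-reducibility, fixing the value on the edges one at a time by binary search, each time querying the NP oracle ``does there exist a legal flow consistent with the values fixed so far that satisfies $\varphi$?''---an NP predicate, since the remaining edge values can be guessed and verified in polynomial time once the formula is fixed. The matching NP-hardness is already supplied by Theorem~\ref{ebflso NP complete}, whose reduction from CNF-SAT builds a network $N$ and a formula $\Ef A\psi$ with $\theta$ satisfiable iff $N\models\Ef A\psi$; this is precisely a flow-synthesis instance, so the network complexity is NP-hard (already for \elflo formulas without atomic propositions).

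The main obstacle, such as it is, lies not in the decision bounds---which transfer immediately from the earlier theorems---but in the passage from decision to synthesis. I must verify that the enumeration for the PSPACE case genuinely fits in polynomial space (which it does, since a flow is a polynomial-length object) and that the ${\rm P}^{\rm NP}$ self-reduction correctly reconstructs a satisfying flow edge by edge, relying on the fact that ``extendability of a partial edge assignment to a legal satisfying flow'' remains an NP predicate when the formula is fixed.
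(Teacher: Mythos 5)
Your proposal is correct and follows essentially the same route as the paper: the decision version is identified with \bfls model checking of $\Ef\varphi$, giving the PSPACE bounds via Theorem~\ref{pspace mc} and hardness of \ctls model checking, and the network complexity via the NP guess-and-check and the CNF-SAT reduction of Theorem~\ref{ebflso NP complete}. The paper leaves the passage from decision to synthesis implicit, whereas you spell it out (enumeration in polynomial space, and an edge-by-edge self-reduction with NP-oracle queries for the fixed-formula case); both of those added arguments are sound.
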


\section{Model Checking Extensions of \texorpdfstring{\bfls}{BFL*}}%
\label{sec mce}

In Section~\ref{exts}, we defined several extensions of \bfls. In this section we study the model-checking complexity for each of the extensions, and show that they do not require an increase in the complexity. The techniques for handling them are, however, richer: For positive path quantification, we have to refine the network and add a path-predicate that specifies positive flow, in a similar way fairness is handled in temporal logics. For maximal-flow quantification, we have to augment the model-checking algorithm by calls to a procedure that finds the maximal flow. For non-integral flow quantification, we have to reduce the model-checking problem to a solution of a linear-programming system. For past operators, we have to extend the model-checking procedure for \ctls with branching past. Finally, for first-order quantification over flow values, we have to first bound the range of relevant values, and then apply model checking to all relevant values.

\subsection{Positive path quantification}%
\label{ppqa}
Given a network $N$, it is easy to generate a network $N'$ in which we add a vertex in the middle of each edge, and in which the positivity of paths correspond to positive flow in the new intermediate vertices. Formally, assuming that we label the new intermediate vertices by an atomic proposition $\mathit{edge}$, then the \bfls path formula $\xi_\mathit{positive}=G(\mathit{edge} \rightarrow >0)$ characterizes positive paths, and replacing a state formula $A\psi$ by the formula $A(\xi_\mathit{positive} \rightarrow \psi)$ restricts the range of path quantification to positive paths.
Now, given a \bfls formula $\varphi$, it is easy to generate a \bfls formula $\varphi'$ such that $N \models \varphi$ iff $N' \models \varphi'$. Indeed, we only have to (recursively) modify path formulas so that vertices labeled $\mathit{edge}$ are ignored: $X\xi$ is replaced by $XX\xi$, and  $\xi_1 U \xi_2$ is replaced by $(\xi_1 \vee \mathit{edge})U(\xi_2 \wedge \neg \mathit{edge})$.
Hence, the complexity of model-checking is similar to \bfls.

\subsection{Maximal flow quantification}%
\label{mfqa}
\stam{
Consider a vertex $v \in V$ in which a flow formula $\Ef^\mathit{max} \varphi$ is evaluated. We can find in polynomial time the value $\gamma_\mathit{max}$ of the maximal flow from $v$ to $T$. Then, whenever the algorithm guesses a flow function, we can restrict attention to functions whose value is $\gamma_\mathit{max}$.
Also, yak yak.
} 
The maximal flow $\gamma_\mathit{max}$ in a flow network can be found in polynomial time. Our model-checking algorithm for \bfls described in the proof of Theorem~\ref{pspace mc} handles each flow state subformula $\Ef \varphi$ by guessing a flow function $f:E \rightarrow \N$ with which the Flow-\ctls formula $\varphi$ holds. For an $\Ef^\mathit{max}$ quantifier, we can guess only flow functions for which the flow leaving the source vertex is $\gamma_\mathit{max}$. In addition, after calculating the maximal flow, we can substitute $\gamma_\mathit{max}$, in formulas that refer to it, by its value. Hence, the complexity of model-checking is similar to that of \bfls.

\subsection{Non-integral flow quantification}%
\label{with ni flow algo}
Recall that our \bfls model-checking algorithm handles each flow state subformula $\Ef \varphi$ by guessing a flow function $f:E \rightarrow \N$ with which the Flow-\ctls formula holds. Moving to non-integral flow functions, the guessed function $f$ should be $f:E \rightarrow \R$, where we cannot bound the size or range of guesses.

Accordingly, in the non-integral case, we guess, for every vertex $v \in V$, an assignment to the flow propositions that appear in $\varphi$. Then, we perform two checks. First, that $\varphi$ is satisfied with the guessed assignment --- this is done by \ctls model checking, as in the case of integral flows. Second, that there is a non-integral flow function that satisfies the flow constraints that appear in the vertices. This can be done in polynomial time by solving a system of linear inequalities~\cite{Sch03}. 
Thus, as in the integral case, handling each flow state formula $\Ef \varphi$ can be done in PSPACE, and so is the complexity of the entire algorithm.

\subsection{Past operators}%
\label{mc with past}
Recall that our algorithm reduces \bfls model checking to a sequence of calls to a \ctls model-checking procedure. Starting with a \bfls formula with past operators, the required calls are to a model-checking procedure for \ctls with past.
By~\cite{KPV12}, model checking \ctls with branching past is PSPACE-complete, and thus so is the complexity of our algorithm.

\subsection{First-Order quantification on flow values}%
\label{foqa}

For a flow network
\[
    N=\zug{AP, V,E,c,\rho,s,T},
\]
let $C_N=1+\Sigma_{e \in E}c(e)$. Thus, for every flow function $f$ for $N$ and for every vertex $v \in V$, we have $f(v)<C_N$. We claim that when we reason about \bfls formulas with quantified flow values, we can restrict attention to values in $\{0,1,\ldots,C_N\}$.

\begin{lem}%
\label{finite forall}
Let $N$ be a flow network and let $\theta=\forall x_1 \varphi$ be a BFL$^\star(\{+,*\})$ formula over the variables $X=\{x_1,\ldots,x_n\}$, and without free variables. Then, $N \models \theta$ iff $N \models \varphi[x_1 \leftarrow \gamma]$, for every $0 \leq \gamma \leq C_N$.
\end{lem}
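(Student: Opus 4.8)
The plan is to reduce the quantification over all of $\N$ to the finite range $\{0,\ldots,C_N\}$ by exploiting the bound, recalled just above, that $f(v)<C_N$ for every flow function $f$ and every vertex $v$. The forward implication is immediate: by the semantics of $\forall$, $N\models\theta$ means $N\models\varphi[x_1\leftarrow\gamma]$ for every $\gamma\in\N$, which in particular covers every $0\le\gamma\le C_N$. The content lies in the converse, and for it I would reduce everything to the following claim: for every $\gamma>C_N$, we have $N\models\varphi[x_1\leftarrow\gamma]$ iff $N\models\varphi[x_1\leftarrow C_N]$. Granting the claim, if $N\models\varphi[x_1\leftarrow\gamma]$ holds for all $0\le\gamma\le C_N$, then it holds for $\gamma=C_N$, hence by the claim for every $\gamma>C_N$ as well; so it holds for all $\gamma\in\N$, i.e.\ $N\models\theta$.

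The heart of the argument is a single algebraic observation about the expressions $g$. Since $O=\{+,*\}$, each $g$ is a polynomial with nonnegative integer coefficients, hence monotone nondecreasing in each argument. Fixing an assignment $a_2,\ldots,a_k\in\N$ to the variables other than $x_1$, I regard $h(x_1)=g(x_1,a_2,\ldots,a_k)$ as a polynomial $\sum_j b_j x_1^{j}$ with $b_j\in\N$. If $h$ is nonconstant, some $b_j\ge 1$ with $j\ge1$, so $h(C_N)\ge C_N^{j}\ge C_N$ (using $C_N\ge1$) and, by monotonicity, $h(\gamma)\ge C_N$ for every $\gamma\ge C_N$; if $h$ is constant then $h(\gamma)=h(C_N)$ for all $\gamma$. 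In either case $\min(g(\gamma,a_2,\ldots,a_k),C_N)=\min(g(C_N,a_2,\ldots,a_k),C_N)$ for every $\gamma\ge C_N$. Combined with the elementary fact that a flow value $w<C_N$ satisfies $w>m$ (resp.\ $w\ge m$) iff it satisfies $w>\min(m,C_N)$ (resp.\ $w\ge\min(m,C_N)$), this shows that every flow proposition in $\varphi$ has the same truth value under $x_1\leftarrow\gamma$ as under $x_1\leftarrow C_N$, for any $\gamma\ge C_N$ and any values of the remaining variables.

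I would then lift this from atomic flow propositions to the whole formula by structural induction. Assuming without loss of generality that no variable is quantified twice and that $x_1$ is not rebound inside $\varphi$, I prove: for every subformula $\chi$ of $\varphi$, every assignment $\sigma$ of $\N$-values to the free variables of $\chi$ other than $x_1$, all $\gamma,\gamma'\ge C_N$, and every vertex or path together with a flow function, $\chi$ has the same satisfaction value when $x_1$ is set to $\gamma$ as when it is set to $\gamma'$. The base case of flow propositions is exactly the observation above, and atomic propositions are independent of $x_1$; the Boolean, temporal ($X,U$), path-quantifier ($A$, and dually $E$) and flow-quantifier ($\Af$, and dually $\Ef$) cases follow directly from the induction hypothesis, since none of these operators inspects the value of $x_1$ except through its subformulas; and the inner quantifier case $\forall x_i$ with $i\ne1$ follows by extending $\sigma$ with each value of $x_i$ and applying the hypothesis, noting that this inner quantifier still ranges over all of $\N$. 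Taking $\gamma'=C_N$ at the top level ($\sigma$ empty) yields the claim.

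The step I expect to be the main obstacle is pinning down the base case correctly: one must notice that it is not the raw value of $g$ but its clamping $\min(g,C_N)$ that governs the truth of a flow proposition, and that the nonnegativity of the coefficients guaranteed by $O=\{+,*\}$ is precisely what forces $g$ either to be independent of $x_1$ or to already reach $C_N$ at $x_1=C_N$. Everything else is a routine induction that merely propagates this invariant through the operators.
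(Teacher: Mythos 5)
Your proof is correct and follows essentially the same route as the paper's: both reduce the lemma to showing $N \models \varphi[x_1 \leftarrow \gamma] \Leftrightarrow N \models \varphi[x_1 \leftarrow C_N]$ for $\gamma > C_N$, via the observation that a nonconstant expression built from $+$, $*$, and constants in $\N$ already reaches $C_N$ at $x_1 = C_N$, while every flow value satisfies $f(v) < C_N$. Your write-up is somewhat more explicit than the paper's (the clamping via $\min(\cdot,C_N)$ and the structural induction lifting the atomic case to the whole formula are left implicit there), but the underlying argument is identical.
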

\begin{proof}
We show that for every $\gamma>C_N$ we have $N \models \varphi[x_1 \leftarrow \gamma]$ iff $N \models \varphi[x_1 \leftarrow C_N]$.
Every expression $g(X)$ in $\varphi$ is obtained from the variables in $X$ by applying the operators $+,*$, and possibly by using constants in $\N$. Hence, every such $g$ is monotonically increasing for every variable $x_i$. Let $\gamma_2,\ldots,\gamma_n$ be an assignment for the variables $x_2,\ldots,x_n$, respectively. Let $g_1(x_1)=g(x_1,\gamma_2,\ldots,\gamma_n)$. Let $\gamma_1$ be an integer bigger than $C_N$. Since $g_1$ is monotonically increasing, we have $g_1(C_N) \leq g_1(\gamma_1)$. If $g_1(C_N) < g_1(\gamma_1)$, then $g_1$ is not a constant function, and since it contains only the operators $+,*$ and constants in $\N$, we have $g_1(C_N) \geq C_N$. Therefore, if $g_1(C_N) \neq g_1(\gamma_1)$, then $g_1(\gamma_1) > g_1(C_N) \geq C_N$. Recall that for every flow function $f$ in $N$ and for every vertex $v \in V$, we have $f(v)<C_N$. Therefore, every state subformula $>g_1(x_1)$ and every state subformula $\geq g_1(x_1)$ holds for $\gamma_1$ iff it holds for $C_N$. Therefore, $N \models \varphi[x_1 \leftarrow \gamma]$ iff $N \models \varphi[x_1 \leftarrow C_N]$. Hence, we have $N \models \forall x_1 \varphi$ iff $N \models \varphi[x_1 \leftarrow \gamma]$ for every $0 \leq \gamma \leq C_N$.
\end{proof}

By Lemma~\ref{finite forall}, the model-checking problem for BFL$^\star(\{+,*\})$ is PSPACE-complete.
\stam{ 
\begin{thm}
The model-checking problem for BFL$^\star(\{+,*\})$ is PSPACE-complete.
\end{thm}
\begin{proof}
The lower bound follows from the PSPACE-hardness of \bfls, which is contained in BFL$^\star(\{+,*\})$.

We prove the upper bound. Let $N$ be a flow network, let $\varphi$ be a BFL$^\star(\{+,*\})$ formula, and let $\varphi_1,\ldots,\varphi_k$ be the state subformulas of $\varphi$ that start with an outermost $\Af$ or $\forall$ quantifier; that is, $\varphi_i$ is not in the scope of an outer quantifier. For every vertex $v$ in $N$ and every subformula $\varphi_i$, we check whether $v \models \varphi_i$ and label $v$ by a fresh atomic proposition $q_i$ that maintains the satisfaction of $\varphi_i$. By replacing $\varphi_i$ by $q_i$, we replace $\varphi$ by a \ctls formula, we can model-check in PSPACE\@. For a vertex $v$ in $N$, we show how to check whether $v \models \varphi_i$.

Assume first that $\varphi_i=\Af \theta$. For every flow function $f$, we need to check whether $v,f \models \theta$. Given a flow function $f$, for every vertex $u$ and every flow proposition that is not quantified in $\theta$ we label $u$ with a fresh atomic proposition that maintains the satisfaction of the flow propositions in $u$. We also replace these flow propositions in $\theta$ by the fresh atomic propositions and denote the resulting formula by $\theta'$. Then, we check whether $v \models \theta'$ by recursively calling this algorithm.

Assume now that $\varphi_i=\forall x \theta$. By Lemma~\ref{finite forall}, we need to check whether for every $0 \leq \gamma \leq C_N$, we have $v \models \theta[x \leftarrow \gamma]$. We do this by recursively calling this algorithm.
\stam{
We can solve the problem (recursively) for every subformula that starts with an outermost $\Af$ or $\forall$ quantifier, for every vertex, and then combine these results by running CTL* model-checking. Solving the problem for a subformula that starts with an outermost $\Af$ or $\forall$ quantifier is done by guessing a flow or an assignment to x and then solving the inner formula recursively.
}
\end{proof}
}

\stam{
\subsection{First-Order quantification on flow values}%
\label{foqa}

\begin{thm}
Model-checking for BFL$^\star(\{+,*\})$ is PSPACE-complete.
\end{thm}
\begin{proof}
Since BFL$^\star(\{+,*\})$ contains \bfls then model-checking of BFL$^\star(\{+,*\})$ is PSPACE-hard. We now show the upper bound. Let $N=\zug{AP, V,E,c,\rho,s,T}$ be a flow network and let $C=1+\Sigma_{e \in E}c(e)$. Let $\forall x_1 \varphi$ be a BFL$^\star(\{+,*\})$ formula with the variables $X=\{x_1,\ldots,x_n\}$, and without free variables. Since every expression $g(X)$ in $\varphi$ is obtained from the variables in $X$ by applying the operators $+,*$, and possibly by using constants in $\N$, then such $g$ is monotonically increasing for every variable $x_i$. Let $\gamma_1,\ldots,\gamma_n$ be an assignment for the variables $x_1,\ldots,x_n$ respectively. Let $g_1(x_1)=g(x_1,\gamma_2,\ldots,\gamma_n)$. Assume that $\gamma_1=C$ and let $\gamma'_1$ be another integer greater than $C$. Since $g_1$ is monotonically increasing, we have $g_1(\gamma_1) \leq g_1(\gamma'_1)$. If $g_1(\gamma_1) < g_1(\gamma'_1)$ then $g_1$ is not a constant function, and since it contains only the operators $+,*$ and natural numbers, we have $g_1(\gamma_1) \geq \gamma_1 = C$. Therefore, if $g_1(\gamma_1) \neq g_1(\gamma'_1)$ then $g_1(\gamma'_1) \geq g_1(\gamma_1) \geq C$. Note that since $C$ is large enough, for every flow $f$ in $N$ and for every vertex $v$ we have $f(v)<C$. Therefore, every state subformula $>g_1(x_1)$ and every state subformula $\geq g_1(x_1)$ holds for $\gamma_1$ iff it holds for $\gamma'_1$. In other words, for every assignment for the variables $x_2,\ldots,x_n$, checking values for $x_1$ greater than $C$ is not needed. Therefore, in order to check whether $N \models \forall x_1 \varphi$ we can check whether $N \models \varphi[x_1 \leftarrow \gamma]$ for every $0 \leq \gamma \leq C$. Now, by using a similar method to the algorithm in the proof of Theorem~\ref{pspace mc} the PSPACE upper bound follows.

TODO:\@ this is not really so trivial. We can solve the problem (recursively) for every subformula that starts with an outermost $\Af$ or $\forall$ quantifier, for every vertex, and then combine these results by running CTL* model-checking. Solving the problem for a subformula that starts with an outermost $\Af$ or $\forall$ quantifier is done by guessing a flow or an assignment to x and then solving the inner formula recursively.
\end{proof}

}

\section{A Polynomial Fragment}%
\label{poly frag}

The upper bounds to the complexity of the model-checking problem that are given in Sections~\ref{sec mc} and~\ref{sec mce} refer to worst-case complexity. As has been the case of \ctls, in practice the complexity is often lower. In particular, there are fragments of \ctls, and hence also \bfls and its extensions for which we can show that the described model-checking algorithms perform better, say by showing that the particular syntax guarantees a bound on the size of the automata associated with path formulas. In this section we show that the model-checking problem for \cbfls (see Section~\ref{frags}) can be solved in polynomial time. The result is not based on a tighter complexity analysis for the case the specification is in \cbfls but rather on a different model-checking algorithm for this logic.

\stam{
We first define a \emph{bold form} for \cbfls formulas.
Recall that a \cbflsz formula is a Boolean assertion over $AP$. All \cbflsz formula are bold.
For $i \geq 0$, a bold \cbflsipo formula is a conjunction $\theta_1 \wedge \Ef \theta_2$, where $\theta_1$ is a Boolean assertion over bold \cbflsi formulas, and $\theta_2$ is a conjunction $A\psi_1 \wedge \cdots \wedge A\psi_n$, where $\psi_j$ is of the form $X^{k_j}\xi_j$, $X^{k_j}G\xi_j$, where $k_j \geq 0$ and $\xi_j$ is a bold \cbflsi formula or a flow proposition (that is, $> \gamma$ or $\geq \gamma$, for an integer $\gamma \in \N$).
Then, a bold \cbfls formula is a bold \cbflsi formula for some $i \geq 0$.

\begin{lem}%
\label{bold nf}
Every \cbfls formula can be translated in linear time to an equivalent bold \cbfls formula.
\end{lem}

\begin{proof}
Yak yak
\end{proof}
}
Our model-checking algorithm reduces the evaluation of a \cbfls formula into a sequence of solutions to the  \emph{ vertex-constrained flow problem}. In this problem, we are given a flow network $N=\zug{AP, V,E,c,\rho,s,T}$ in which each vertex $v \in V$ is attributed by a range $[\gamma_l,\gamma_u] \in \N \times (\N \cup \{\infty\})$.
The problem is to decide whether there is a flow function $f:E\rightarrow \R$ such that for all vertices $v \in V$, we have $\gamma_l \leq f(v) \leq \gamma_u$.

\begin{lem}%
\label{cfp}
The vertex-constrained flow problem can be solved in polynomial time. If there is a solution that is a non-integral flow function, then there is also a solution that is an integral flow function, and the algorithm returns such a solution.
\end{lem}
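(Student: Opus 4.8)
The plan is to reduce the vertex-constrained flow problem to a single maximum-flow computation, from which both the polynomial running time and the integrality guarantee follow. First I would dispose of the two features that distinguish this problem from ordinary max-flow: the bounds constrain \emph{vertices} rather than edges, and they include \emph{lower} bounds $\gamma_l$ and possibly infinite upper bounds $\gamma_u=\infty$. The infinite bounds are harmless: setting $C=1+\sum_{e\in E}c(e)$, we have $f(v)\le\sum_{e\in E^v}c(e)\le\sum_{e\in E}c(e)<C$ for every flow function $f$ and every vertex $v$, so I may replace each upper bound $\gamma_u=\infty$ by $C$ without changing the set of feasible flows.

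Next I would convert the vertex bounds into edge bounds by the standard vertex-splitting gadget: replace each $v\in V$ by two copies $v_{in},v_{out}$, redirect the edges of $E^v$ into $v_{in}$ and those of $E_v$ out of $v_{out}$, and insert a new edge $\zug{v_{in},v_{out}}$ that carries exactly the through-flow $f(v)$, to which I attach lower bound $\gamma_l$ and upper bound $\gamma_u$; the original edges keep lower bound $0$ and upper bound $c(e)$. A short check using flow conservation confirms that the split edge carries $\sum_{e\in E^v}f(e)=\sum_{e\in E_v}f(e)=f(v)$ at internal vertices, and the outgoing (resp.\ incoming) total $f(s)$ (resp.\ $f(t)$) at the source (resp.\ targets). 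To make conservation hold uniformly at every node, I would add a super-source $\hat s$ with an edge $\zug{\hat s,s}$, a super-target $\hat t$ with edges $\zug{t,\hat t}$ for every $t\in T$, and a return edge $\zug{\hat t,\hat s}$ of capacity $C$. The result is a circulation network with integral lower and upper bounds whose feasible circulations correspond exactly to the flow functions $f$ of $N$ satisfying $\gamma_l\le f(v)\le\gamma_u$ for all $v$.

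The remaining task is the \emph{feasible circulation with lower bounds} problem, which is the main obstacle, since ordinary max-flow accommodates only capacities (upper bounds). Here I would apply the classical reduction: subtract each lower bound $\ell(e)$ from the capacity of $e$, add a fresh super-source $S^*$ and super-sink $Q^*$ together with edges $S^*\to b$ and $a\to Q^*$ of capacity $\ell(e)$ for each edge $e=\zug{a,b}$ with $\ell(e)>0$, and compute a maximum $S^*$--$Q^*$ flow. A feasible circulation exists iff this max-flow saturates all edges leaving $S^*$, and from a saturating flow one recovers a feasible circulation, and hence the desired $f$, in linear time by adding back the lower bounds. Since a maximum flow is computable in polynomial time~\cite{FF62}, the whole procedure runs in polynomial time.

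The integrality claim is then immediate from the structure of the reduction. Every capacity and lower bound appearing above is an integer: the $c(e)$ are integral by definition, the $\gamma_l,\gamma_u$ lie in $\N\times(\N\cup\{\infty\})$, and $C$ is an integer. By the integrality theorem for maximum flow, a network with integral capacities admits an integral maximum flow, which the algorithm returns. Consequently, if \emph{any} real feasible circulation exists—equivalently, if the max-flow saturates the edges out of $S^*$—then the integral maximum flow already witnesses this, and adding back the integral lower bounds yields an \emph{integral} feasible circulation; projecting it back through the vertex-splitting gadget gives an integral flow function $f:E\to\N$ with $\gamma_l\le f(v)\le\gamma_u$ for every $v$. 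Thus whenever a real solution exists an integral one does too, and the algorithm produces it.
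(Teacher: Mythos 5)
Your proposal is correct and follows essentially the same route as the paper: the vertex-splitting gadget to move the bounds onto edges, followed by a reduction of feasible flow with lower bounds to maximum flow, with integrality inherited from the integrality theorem for max-flow. The only difference is that you unfold the standard lower-bound/circulation reduction explicitly, whereas the paper delegates it to a textbook citation.
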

\begin{proof}
Given $N$, we construct a new flow network $N'$ in which the attributions are on the edges. Thus, $N'$ is a flow network with lower bounds on the flow. We construct $N'$ by splitting every vertex $v \in V$ with attribution $[\gamma_l,\gamma_u]$ into two vertices in $N'$: $v_{in}$ and $v_{out}$. The vertices are connected by an edge $\zug{v_{in},v_{out}}$ with attribution $[\gamma_l,\gamma_u]$. For every edge $\zug{u,v} \in E$, we add to $N'$ an edge $\zug{u_{out},v_{in}}$ with attribution $[0,c(\zug{u,v})]$. It is well known that deciding whether there exists a feasible flow and finding such a flow in a flow network with lower bounds can be done in polynomial time (see, for example, Chapter~6.7 in~\cite{AMO93}). Also, the algorithm for finding such a feasible flow reduces the problem to a maximum-flow problem. Accordingly, if there is a feasible flow then the algorithm returns an integral one.
\end{proof}

\begin{thm}
\cbfls model checking can be solved in polynomial time.
\end{thm}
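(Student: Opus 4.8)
The plan is to follow the bottom-up labeling scheme of the algorithm in the proof of Theorem~\ref{pspace mc}, but to replace the nondeterministic guess of a flow function at each flow quantifier by a single call to the vertex-constrained flow problem, which by Lemma~\ref{cfp} is solvable in polynomial time. Concretely, I would process a \cbfls formula along the hierarchy $\cbflsz,\cbflso,\cbflsi,\ldots$ At level $0$, a formula is a Boolean assertion over $AP$ and is evaluated at every vertex directly. Inductively, a \cbflsipo formula is a Boolean assertion over \cbflsi subformulas (already evaluated and recorded as fresh atomic propositions) and over existential subformulas $\Ef(A\psi_1 \wedge \cdots \wedge A\psi_n)$ in the prescribed normal form. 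The Boolean combinations are immediate once each conjunct has been evaluated at every vertex, so the entire burden reduces to evaluating a single formula of the form $\Ef(A\psi_1 \wedge \cdots \wedge A\psi_n)$ at each vertex $v$.

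The key observation, and the point where the conjunctive restriction is essential, is that evaluating such a formula at $v$ amounts to deciding one instance of the vertex-constrained flow problem on $N$. Each conjunct $A\psi_j$ is universal, and $\psi_j$ is either $X^{k_j}\xi_j$ or $X^{k_j}G\xi_j$, with $\xi_j$ a flow proposition or a (labeled) \cbflsi formula. Because the path quantifier is $A$ and the only connective joining the conjuncts is $\wedge$, there is no disjunctive choice to resolve: the $X^{k_j}$ case requires that \emph{every} vertex occurring at position $k_j$ on some target $v$-path satisfy $\xi_j$, and the $X^{k_j}G$ case that every vertex occurring at a position $\geq k_j$ on some target $v$-path satisfy $\xi_j$. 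The relevant vertex sets are computed by a polynomial reachability/distance computation in $N$, and, under the strong $X$ semantics, a shortest-target-path check discards conjuncts that are false purely because some target $v$-path is too short (such a conjunct makes the whole formula false at $v$ regardless of flow). When $\xi_j$ is a labeled \cbflsi formula the conjunct is flow-independent and is verified by the same reachability computation; when $\xi_j$ is a flow proposition it imposes a lower bound (from $>\gamma$, $\geq\gamma$) or an upper bound (from $<\gamma$, $\leq\gamma$) on $f$ at each vertex of the relevant set.

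Intersecting these bounds over all conjuncts assigns to every vertex $u$ a range $[\gamma_l(u),\gamma_u(u)]$ (with $[0,\infty)$ at vertices that no conjunct constrains, including all vertices not reachable from $v$); strict bounds are turned into non-strict ones by $\pm 1$, which is exact for integral flow. Then $v \models \Ef(A\psi_1 \wedge \cdots \wedge A\psi_n)$ holds exactly when this vertex-constrained flow instance is feasible, and by Lemma~\ref{cfp} feasibility is decidable in polynomial time and, when it holds, yields an integral witnessing flow, matching the integral semantics of \bfls. Running this evaluation for every subformula and every vertex requires only polynomially many vertex-constrained flow solves, each polynomial, so the whole procedure runs in polynomial time; overall correctness follows by induction on the hierarchy.

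The main obstacle I expect is the correctness argument for this core reduction, namely proving that $v \models \Ef(A\psi_1 \wedge \cdots \wedge A\psi_n)$ if and only if the induced vertex-constrained flow instance is feasible. The ``only if'' direction is routine, since any feasible flow satisfies all the universal conjuncts by the very construction of the ranges; the ``if'' direction requires showing that the per-vertex ranges capture \emph{all} constraints a common witnessing flow must meet and that nothing beyond independent per-vertex bounds is forced. This is precisely what breaks for richer fragments: with $\vee$, $E$, $F$, or $U$ at positive polarity a requirement could be met along some but not all paths, or at one of several alternative vertices, producing a disjunctive constraint that is not expressible as a single range. The syntactic restriction to $\wedge$, $A$, $X$, and $G$ of positive polarity is exactly what guarantees that the quantified flow is constrained only by a conjunction of independent per-vertex bounds, and hence by one instance of the vertex-constrained flow problem.
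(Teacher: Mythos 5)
Your proposal is correct and follows essentially the same route as the paper: bottom-up labeling through the $\mathrm{CBFL}^\star_i$ hierarchy, translating each $\mathcal{E}(A\psi_1 \wedge \cdots \wedge A\psi_n)$ subformula into per-vertex Boolean checks plus per-vertex flow ranges (computed via reachability at distance $k_j$ or $\geq k_j$), converting strict to non-strict bounds for integral flow, and deciding feasibility by one call to the vertex-constrained flow problem of Lemma~\ref{cfp}. Your explicit handling of conjuncts falsified by too-short target paths under the strong $X$ semantics, and your isolation of the ``if'' direction of the reduction as the place where conjunctivity is essential, are slightly more careful than the paper's exposition but do not change the argument.
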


\begin{proof}
Let  $N=\zug{AP, V,E,c,\rho,s,T}$, and consider a \cbfls formula $\varphi$. If $\varphi$ is in \cbflsz, we can clearly label in linear time all the vertices in $N$ by a fresh atomic proposition $p_\varphi$ that maintains the satisfaction of $\varphi$. That is, in all vertices $v \in V$, we have that $p_\varphi \in \rho(v)$ iff $v \models \varphi$.
\stam{
Otherwise, by Lemma~\ref{bold nf}, we can assume that $\varphi$ is a bold \cbflsipo formula for some $i \geq 0$.
}
Otherwise, $\varphi$ is a \cbflsipo formula for some $i \geq 0$.
We show how, assuming that the vertices of $N$ are labeled by atomic propositions that maintain satisfaction of the subformulas of $\varphi$ that are \cbflsi formulas, we can label them, in polynomial time,  by a fresh atomic proposition that maintains the satisfaction of $\varphi$.

Recall that $\varphi$ is a Boolean assertion over \cbflsi formulas and flow formulas of the form
$\Ef (A\psi_1 \wedge \cdots \wedge A\psi_n)$, where each $\psi_j$ is of the form $X^{k_j}\xi_j$ or $X^{k_j}G\xi_j$, where $k_j \geq 0$ and $\xi_j$ is a \cbflsi formula or a flow proposition (that is, $> \gamma$, $< \gamma$, $\geq \gamma$, or $\leq \gamma$, for an integer $\gamma \in \N$).

\stam{
Since \cbflsi formulas have already been evaluated in all vertices, we can easily evaluate the subformulas of $\varphi$ that are \cbflsi formulas. We describe how to evaluate flow formulas. Let $\theta=\Ef (A\psi_1 \wedge \cdots \wedge A\psi_n)$ be such a formula.
}
Since \cbflsi subformulas have already been evaluated, we describe how to evaluate subformulas of the form $\theta=\Ef (A\psi_1 \wedge \cdots \wedge A\psi_n)$.
Intuitively, since the formulas in $\theta$ include no disjunctions, they impose constraints on the vertices of $N$ in a deterministic manner. These constraints can be checked in polynomial time by solving a vertex-constrained flow problem. Recall that for each $1 \leq j \leq n$, the formula $\psi_j$ is of the form $X^{k_j}\xi_j$ or $X^{k_j}G\xi_j$, for some $k_j \geq 0$, and a \cbflsi formula or a flow proposition $\xi_j$. In order to evaluate $\theta$ in a vertex $v \in V$, we proceed as follows. For each $1 \leq j \leq n$, the formula $\xi_j$ imposes either a Boolean constraint (in case $\xi_j$ is a \cbflsi formula) or a flow constraint (in case $\xi_j$ is a flow proposition) on a finite subset $V^v_j$ of $V$. Indeed, if $\psi_j=X^{k_j}\xi_j$, then $V^v_j$ includes all the vertices reachable from $v$ by a path of length $k_j$,
and if $\psi_j=X^{k_j}G\xi_j$, then $V^v_j$ includes all the vertices reachable from $v$ by a path of length at least $k_j$. We attribute each vertex by the constraints imposed on it by all the conjuncts in $\theta$. If one of the Boolean constrains does not hold, then $\theta$
does not hold in $v$. Otherwise, we obtain a set of flow constraints for each vertex in $V$.
For example, if $\theta=\Ef(AXXp \wedge AXX>5 \wedge AG\leq 8)$, then in order to check whether $\theta$ holds in $s$, we assign the flow constraint $\leq 8$ to all the vertices reachable from $s$, and  assign the flow constraint $>5$ to all the successors of the successors of $s$. If one of these successors of successors does not satisfy $p$, we can skip the check for a flow and conclude that $s$ does not satisfy $\theta$. Otherwise, we search for such a flow, as described below.

The flow constrains for a vertex induce a closed, open, or half-closed range. The upper bound in the range may be infinity. For example, the constrains $>6$, $<10$, $\leq 8$ induce the half-closed range $\lopen{6,8}$. Note that it may be that the induced range is empty. For example, the constraints $\leq 6$ and $>8$ induce an empty range. Then, $\theta$
does not hold in $v$. Since we are interested in integral flows, we can convert all strict bounds to non-strict ones. For example, the range $\lopen{6,8}$ can be converted to $[7,8]$. Note that since we are interested in integral flow, a non-empty open range may not be satisfiable, and we refer to it as an empty range. For example, the range $(6,7)$ is empty. Hence, the satisfaction of $\theta$ in $v$ is reduced to an instance of the vertex-constrained flow problem. By Lemma~\ref{cfp}, deciding whether there is a flow function that satisfies the constraints can be solved in polynomial time.
\stam{
, and in case the answer is positive, there is an integral flow function.
}
\end{proof}

\stam{

First, note that a state satisfies a formula of the form $A(\theta_1 \wedge \theta_2)$ iff it satisfies $A\theta_1 \wedge A\theta_2$. Also, a path satisfies $X(\theta_1 \wedge \theta_2)$ iff it satisfies $X\theta_1 \wedge X\theta_2$. Moreover, a path satisfies $G(\theta_1 \wedge \theta_2)$ iff it satisfies $G\theta_1 \wedge G\theta_2$. Therefore, we can translate $\varphi$ to a formula of the form $\varphi_1 \wedge \ldots \wedge \varphi_n$ where each subformula $\varphi_i$ contains only the operators $A,X$ and $G$, that is, $\varphi_i$ does not contain the operator $\wedge$. The blowup in this translation is linear. Now, $N \models \psi$ iff there is a flow in $N$ such that $\varphi_i$ is satisfied for every $1 \leq i \leq n$. We now show how to find such a flow.

We translate every subformula $\varphi_i$ to a constraint on the atomic propositions or the flow in a subset of vertices in $N$. If $\varphi_i$ is a flow proposition or an atomic proposition or its negation, then it induces a constraint for the source vertex $s$ of $N$. If $\varphi_i=AX^k\theta$ where $X^k$ denotes $k$ repetitions of $X$ and $\theta$ is a state formula, then $\varphi_i$ requires that every vertex that can be reached from $s$ by a path of length $k$ satisfies $\theta$. Recursively, if $\varphi_i$ contains only the operators $A,X$ and a flow proposition or an atomic proposition (or its negation), then it induces a constraint for a subset of vertices. For example, the formula $AXXAXXX\neg p$ requires that in every vertex that can be reached from $s$ by a path of length $5$ the atomic proposition $p$ does not hold. If $\varphi_i$ contains the operator $G$ then since in every finite path a formula of the form $GX\theta$ does not hold, we can assume that $G$ is not followed by $X$. Consider a formula of the form $\theta=AX^{k}G\xi$ for a state formula $\xi$. The formula $\theta$ requires that the formula $\xi$ holds in every state that can be reached by a path of length greater than or equal to $k$. Therefore, if $\varphi_i$ contains $G$ we can still resolve it recursively and obtain a constraint for a subset of vertices in $N$. For example, the formula $AXXAXGp$ requires that $p$ holds in every state that can be reached from $s$ by a path of length greater than or equal to $3$. Therefore, we can translate every subformula $\varphi_i$ to a constraint on a subset of vertices. If for some vertex we obtain contradicting constrains, for example $p$ and also $\neg p$, or $>3$ and also $\leq 2$, then $\psi$ is not satisfiable. Otherwise, we need to find a flow that satisfies the constrains for every vertex.

The flow constrains for a vertex induce a closed, open or half-closed range. The upper bound in such a range may be infinite. For example, the constrains $>6,<10,\leq 8$ induce the half-closed range $\lopen{6,8}$. Since we are interested in integral flows, we can convert every strict bound to an unstrict bound. For example the range $\lopen{6,8}$ can be changed to $[7,8]$. Now, we construct a new flow network $N'$ by splitting every vertex $v$ in $N$ to two vertices in $N'$: a vertex $v_{in}$ and a vertex $v_{out}$. For every edge $\zug{u,v}$ in $N$ we add an edge $\zug{u_{out},v_{in}}$ in $N'$ with the same capacity and for every vertex $v$ we add an edge $\zug{v_{in},v_{out}}$. The capacity of an edge $\zug{v_{in},v_{out}}$ is the upper bound for the flow constraint on $v$. In addition, we assign to every edge $\zug{v_{in},v_{out}}$ a lower bound on its flow according to the flow constraint on $v$. It is well known that deciding whether there exists a feasible flow and finding such a flow in a flow network with lower bounds can be done in polynomial time~\cite{AMO93}. Also, the algorithm for finding a feasible flow with lower bounds reduces the problem to a maximum-flow problem and it ensures that if there is a feasible flow then there is an integral feasible flow. The algorithm finds an integral feasible flow that satisfies the flow constraints and therefore satisfies $\varphi$.

\begin{exa}
The \cbflso formula $\Af(<8 \rightarrow EX<3) \wedge \Ef(<10 \wedge AXG \geq 3)$ states that if the flow in $s$ is less than $8$ then $s$ has a successor with flow less than $3$, but there is a flow $f$ with $f(s)<10$ and for every vertex $v \neq s$ we have $f(v) \geq 3$.
\end{exa}
} 

\begin{rem}
Note that the same algorithm can be applied when we consider non-integral flow functions, namely in \cbfls with the $\Af^{\R}$ flow quantifier. There, the induced vertex-constrained flow problem may include open boundaries. The solution need not be integral, but can be found in polynomial time by solving a system of inequalities~\cite{Sch03}.
\end{rem}

\stam{
\begin{thm}
Let $\psi$ be a \rbfls formula and let $N$ be a flow network. Deciding whether $N \models \psi$ can be done in polynomial time.
\end{thm}
\begin{proof}
If $\psi$ is a \rabfls formula then checking whether $N \models \psi$ can be done in polynomial time by checking whether $N \not \models \neg \psi$, since $\neg \psi$ can be written as a \rebfls formula.
If $\psi$ is a \rbfls formula then checking whether $N \models \psi$ can be done similarly to the algorithm described in the proof of Theorem~\ref{Delta2 upper bound}. In Theorem~\ref{Delta2 upper bound}, model checking of an inner \ebflso or \abflso formula is done by an NP oracle. Since $\psi$ is a \rbfls formula then the inner formulas in our case are \rebfls and \rabfls formulas and thus the model checking can be done in polynomial time.
\end{proof}
}

\section{Query Checking}%
\label{qc}

As discussed in Section~\ref{intro}, query checking is a useful methodology for system exploration.
A query is a specification with the place-holder ``$?$'', and the goal is to find replacements to $?$ with which the specification holds. In this section we extend the methodology to \bfls. We first need some definitions. 

A \emph{propositional \bfls query} (propositional query, for short) is a \bfls formula in which a single state formula is ``$?$''. 
For a propositional query $\psi$ and a propositional assertion $\theta$ over $AP$, we denote by $\psi[? \leftarrow \theta]$ the formula obtained from $\psi$ by replacing $?$ by $\theta$. 
Given a network $N$ and a query $\psi$,
a \emph{solution} to $\psi$ in $N$ is a propositional assertion $\theta$ over $AP$ such that $N \models \psi[? \leftarrow \theta]$. For example, the propositional query
$\Ef E((? \wedge \geq 5) U \mathit{target})$ asks which propositional assertions $\theta$ are such that the network satisfies
$\Ef E((\theta \wedge \geq 5) U \mathit{target})$,
namely propositional assertions $\theta$ such that there is a flow and there is a path to the target in which all vertices satisfy $\theta$ and the flow in them is at least $5$.
Note that we do not allow the solution $\theta$ to include flow propositions.
Note also that only a single occurrence of $?$ is allowed in the query. Richer settings in \ctls query checking allow queries with several place holders, namely $?_1,?_2,\ldots,?_m$, possibly with multiple appearances to each of them~\cite{GC02}. The focus on queries with a single $?$ enables the query checker to refer to the \emph{polarity} of queries. 
Formally, we say that a propositional query $\psi$ is \textit{positive} (\emph{negative}) if the single $?$ in it is in the scope of an even (respectively, odd) number of negations. 
The polarity of queries implies monotonicity, in the following sense (the proof proceeds by an easy induction on the structure of the query).
\begin{lem}%
\label{lemma mono}
Consider a positive (negative) \bfls query $\psi$. Let $\theta$ and $\theta'$ be propositional formulas such that $\theta$ implies $\theta'$ ($\theta'$ implies $\theta$, respectively). Then, $\psi[? \leftarrow \theta]$ implies $\psi[? \leftarrow \theta']$.
\end{lem}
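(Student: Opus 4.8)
The plan is to prove the statement by structural induction on the query $\psi$, carrying the positive and negative cases together as a simultaneous (dual) induction, and to strengthen the claim so that it speaks about satisfaction at arbitrary vertices, paths, and flow functions rather than only about closed formulas at the source. Concretely, I would fix a network $N$ and prove, by induction on the structure of an arbitrary sub-query $\chi$ of $\psi$, the following two statements at once: (i) if $\chi$ is a positive state (respectively, path) query and $\theta$ implies $\theta'$, then $v,f \models \chi[? \leftarrow \theta]$ entails $v,f \models \chi[? \leftarrow \theta']$ for every vertex $v$ (respectively, every path $\pi$) and every flow $f$; and (ii) if $\chi$ is negative and $\theta'$ implies $\theta$, then the same entailment holds. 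Since $N$ is arbitrary and the top-level query is a closed state formula evaluated at $s$, the lemma follows by instantiating $v=s$. Here I read ``$\theta$ implies $\theta'$'' as propositional entailment, so that the satisfaction of a solution at a vertex depends only on its labelling $\rho(v)$ and not on $f$, which makes the base case trivial.

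The crucial structural observation is that a query contains exactly one occurrence of $?$, so along the unique path from the root of $\chi$ down to that leaf the polarity is governed by the parity of the number of negations, while every subformula lying off this path contains no $?$ and is therefore left unchanged by both substitutions. This reduces all cases to routine monotonicity arguments. The base case $\chi = ?$ is positive and collapses to the hypothesis $\theta \Rightarrow \theta'$. In the Boolean, modal, path-quantifier (S4), and flow-quantifier (S5) cases where $?$ sits beneath an operator that introduces no negation (namely $\vee$, $X$, $U$, $A$, and $\Af$), the relevant inductive subformula retains the polarity of $\chi$ and the operator is monotone in that argument, so the induction hypothesis of the same parity transfers directly; for the disjunction and until cases I would additionally note that the unchanged argument contributes nothing, and for rule P1 the entailment for a path formula that is a state formula is inherited verbatim from the state-level hypothesis via clause (P1).

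The step that requires real care, and the reason for running the two claims in parallel, is negation. When $\chi = \neg\chi_1$ is positive, its immediate subformula $\chi_1$ is negative, and establishing $v,f \not\models \chi_1[? \leftarrow \theta] \Rightarrow v,f \not\models \chi_1[? \leftarrow \theta']$ amounts, by contraposition, to proving $v,f \models \chi_1[? \leftarrow \theta'] \Rightarrow v,f \models \chi_1[? \leftarrow \theta]$. I would obtain exactly this by applying the negative induction hypothesis to $\chi_1$ with the two substitutions interchanged: its premise is then $\theta \Rightarrow \theta'$, which is precisely the hypothesis available for the positive $\chi$. The symmetric computation handles a negative $\chi = \neg\chi_1$ using the positive hypothesis for $\chi_1$. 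Keeping the two implication directions and the pairing of $\theta$ and $\theta'$ straight in this single case is the main (indeed essentially the only) obstacle; once the dual induction hypotheses are phrased correctly, the negation case, and hence the whole lemma, goes through immediately.
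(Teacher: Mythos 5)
Your proposal is correct and matches the paper's approach: the paper simply asserts that the lemma "proceeds by an easy induction on the structure of the query," and your dual structural induction (strengthened to arbitrary vertices, paths, and flow functions, with the polarity flip handled at the negation case via the contrapositive and the swapped induction hypothesis) is exactly that argument carried out in detail.
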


\begin{cor}%
\label{lemma prop fl query}
Consider a network $N$. Let $\psi$ be a positive (negative) \bfls query, let $\theta$ be a solution for $\psi$ in $N$, and let $\theta'$ be a propositional formula such that $\theta$ implies $\theta'$ ($\theta'$ implies $\theta$, respectively). Then, $\theta'$ is a solution for $\psi$ in $N$.
\end{cor}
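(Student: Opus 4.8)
The plan is to derive the corollary directly from Lemma~\ref{lemma mono}, which already establishes the semantic implication between the two substituted formulas; the corollary merely instantiates that implication at the concrete network $N$ and invokes the hypothesis that $\theta$ is a solution. Concretely, recall that by definition a propositional formula $\theta$ is a solution for $\psi$ in $N$ iff $N \models \psi[? \leftarrow \theta]$, and that ``$\varphi$ implies $\varphi'$'' for closed \bfls formulas is read as semantic entailment, namely every network satisfying $\varphi$ also satisfies $\varphi'$.

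First I would treat the positive case. Here $\psi$ is positive and $\theta$ implies $\theta'$, so Lemma~\ref{lemma mono} yields that $\psi[? \leftarrow \theta]$ implies $\psi[? \leftarrow \theta']$. Since $\theta$ is a solution for $\psi$ in $N$, we have $N \models \psi[? \leftarrow \theta]$. Specializing the entailment to the particular network $N$ gives $N \models \psi[? \leftarrow \theta']$, which is precisely the statement that $\theta'$ is a solution for $\psi$ in $N$. The negative case is entirely symmetric: now $\psi$ is negative and $\theta'$ implies $\theta$, so the negative clause of Lemma~\ref{lemma mono} again delivers that $\psi[? \leftarrow \theta]$ implies $\psi[? \leftarrow \theta']$, and the same instantiation at $N$ concludes the argument.

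I do not expect any genuine obstacle, since all the monotonicity work is carried out in Lemma~\ref{lemma mono} by induction on the structure of the query. The only point worth stating carefully is the reading of ``implies'' as preservation of satisfaction across networks (or at least at the fixed $N$), so that the universally quantified implication supplied by the lemma may be specialized to $N$; once this is fixed, the corollary is an immediate one-line consequence.
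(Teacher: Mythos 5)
Your proposal is correct and matches the paper's intent exactly: the paper states this as an immediate corollary of Lemma~\ref{lemma mono} without a separate proof, and your argument (instantiate the lemma's entailment at the fixed network $N$ and apply the hypothesis $N \models \psi[? \leftarrow \theta]$) is precisely the one-line deduction the paper leaves implicit. No gaps.
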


It is not hard to see that a propositional \bfls query may have multiple solutions.  Corollary~\ref{lemma prop fl query} enables us to partially order them. Then, given a query $\psi$, we say that a solution $\theta$ is \emph{strongest} if there is no solution $\theta'$ such that $\theta'$ is not equivalent to $\theta$ and either $\psi$ is positive, in which case $\theta'$ implies $\theta$, or $\psi$ is negative, in which case $\theta$ implies $\theta'$.
A query, however, may not only have multiple solutions, but may also have multiple strongest solutions. Accordingly, as is the case with \ctls, a search for all strongest solutions has to examine all possible solutions~\cite{BG01}. Consider a network $N$ and a propositional \bfls query $\psi$. Each solution to $\psi$ in $N$ corresponds to an assignment to all subsets of atomic propositions,
thus there are $2^{2^{|AP|}}$ assertions to check. For each we need to model check the formula $\psi[? \leftarrow \theta]$, for an assertion $\theta$. Accordingly, the length of $\psi[? \leftarrow \theta]$ is $O(|\psi|+2^{|AP|})$. Thus, finding all strongest solutions to a propositional query is a very complex task, involving $2^{2^{|AP|}}$ executions of the algorithm described in Theorem~\ref{pspace mc} for a formula of length $O(|\psi|+2^{|AP|})$.

Fortunately, in the context of \bfls, we are able to point to a class of interesting queries for which query checking is not more complex than model checking: queries in which the place holder is the value in a flow proposition.  Formally, a \emph{value \bfls query} (value query, for short) is a \bfls formula in which a single flow proposition is of the form $> ?$, $\geq ?$, $< ?$, or $\leq ?$. 
Each value query $\psi$ is either a \emph{lower-bound query}, in case $\psi$ is positive iff the flow proposition with $?$ is $>?$ or $\geq ?$ (that is, either $\psi$ is positive and the flow proposition with $?$ is $>?$ or $\geq ?$, or $\psi$ is negative and the flow proposition with $?$ is $<?$ or $\leq ?$), or an \emph{upper-bound query}, in case $\psi$ is positive iff the flow proposition with $?$ is $<?$ or $\leq ?$. 
For a value query $\psi$ and an integer $\gamma \in \N$, we denote by $\psi[? \leftarrow \gamma]$ the \bfls formula obtained from $\psi$ by replacing $?$ by $\gamma$. 
Given a network $N$ and a value query $\psi$,
a \emph{solution} to $\psi$ in $N$ is an integer $\gamma \in \N$ such that $N \models \psi[? \leftarrow \gamma]$. For example, the value query
$\Ef EF(\mathit{target} \wedge (\geq ?))$ asks which values of flow can reach a target vertex.

 \stam{
 \begin{proof}
Let $\varphi=\exists \xi$ and $\varphi'=\forall \xi$ be \ebflso and \abflso propositional queries, where $\xi$ is a flow-\ctls formula. Thus, $\xi$ contains $?$. Let $f$ be a flow, let $\xi_f$ be a \ctls query obtained from $\xi$ by replacing every flow proposition with a fresh atomic proposition, and let $N_f$ be a network obtained from $N$ by evaluating the new atomic propositions in each vertex according to $f$. Then, for every propositional formula $\tilde{\theta}$ we have $N,f \models \xi[? \leftarrow \tilde{\theta}]$ iff $N_f \models \xi_f[? \leftarrow \tilde{\theta}]$. By Lemma~\ref{lemma prop query}, since $\theta$ implies $\theta'$, then $N_f \models \xi_f[? \leftarrow \theta]$ implies $N_f \models \xi_f[? \leftarrow \theta']$. Therefore, if $N,f \models \xi[? \leftarrow \theta]$ then $N,f \models \xi[? \leftarrow \theta']$. Hence, if $N \models \varphi[? \leftarrow \theta]$ then $N \models \varphi[? \leftarrow \theta']$. Also, if $N \models \varphi'[? \leftarrow \theta]$ then $N \models \varphi'[? \leftarrow \theta']$. By a recursive argument similar to the argument in our \bfls model-checking algorithm (in the proof of Theorem~\ref{pspace mc}), the claim holds not only for \ebflso and \abflso propositional queries, but also for every \bfls propositional query. 
\stam{
Let $\{\varphi_1,\ldots,\varphi_k\}$ be the set of flow state formulas in $\psi[? \leftarrow \theta]$ and let $\{\varphi'_1,\ldots,\varphi'_k\}$ be the set of flow state formulas in $\psi[? \leftarrow \theta']$. Assume that $\varphi_1,\ldots,\varphi_k$ (respectively $\varphi'_1,\ldots,\varphi'_k$) are ordered so that for all $1 \leq i \leq k$, all the subformulas of $\varphi_i$ (respectively $\varphi'_i$) have indices in $\{1,\ldots,i\}$. Recall that the model-checking procedure for $\psi[? \leftarrow \theta]$ (respectively $\psi[? \leftarrow \theta']$) labels $N$ by new atomic propositions $q_1,\ldots,q_k$ (respectively $q'_1,\ldots,q'_k$) so that for all vertices $v$ and $1 \leq i \leq k$, we have that $v \models q_i$ ($v \models q'_i$) iff $v \models \varphi_i$ ($v \models \varphi'_i$).
For each $i$, when we handle $\varphi_i$ ($\varphi'_i$), it is an \ebflso or a \abflso formula. Note that by induction on $i$,
}
\end{proof}

\begin{thm}
Finding the set of all strongest solutions for a propositional \bfls query $\psi$ over $AP$ in a network $N$, can be done in time $2^{2^{|AP|}} |N| 2^{O(|\psi|+2^{|AP|})}$.
\end{thm}
\begin{proof}
As in the case of \ctls, the set of strongest solutions for a given query can be found by examining all possible solutions~\cite{BG01}.
Indeed, each solution corresponds to an assignment to the atomic propositions, thus there are $2^{2^{AP}}$ assignments to check, and for each we need to model check the formula $\psi[? \leftarrow \theta]$, for an assertion $\theta$ that defines the assignment. Accordingly, the size of $\psi[? \leftarrow \theta]$ is $O(|\psi|+2^|AP|)$. The PSPACE algorithm described in Theorem~\ref{pspace mc} would thus require time $|N| 2^{O(|\psi|+2^{|AP|})}$ for each iteration, and we are done.
\end{proof}

\begin{rem}%
\label{sol with flow}
Recall that in a propositional query, we allow only solutions over $AP$. Alternatively, we could also allow solutions that involve flow propositions, thus replacing the place holder $?$ by assertions $>\gamma$ or $\geq \gamma$, for an integer $\gamma \in \N$. 
For example, a solution $(\geq 10)$ to the query $\Ef E(? U (\mathit{target} \wedge (\geq 20)))$ teaches us that a flow of $20$ in the target vertex may be achieved by a flow in which there is a path to the target along which at least $10$ flow units arrive to all vertices.
We distinguish between two cases. In the first, solutions consists of a single flow proposition. We find such solutions less helpful than the ones the user can get in value queries. In the second, solutions are assertions that include one or more flow propositions, as in $(\mathit{safe} \wedge (\geq 10))$. Then, yak yak
%
\end{rem}
}

For a value query $\psi$, we say that a solution $\gamma$ to $\psi$ in $N$ is \emph{strongest} if $\psi$ is a lower-bound query and there is no solution $\gamma'$ to $\psi$ in $N$ such that $\gamma' > \gamma$, or $\psi$ is an upper-bound query and there is no solution $\gamma'$ to $\psi$ in $N$ such that $\gamma' < \gamma$.
Since a strongest solution in a value query is either a maximum or a minimum of a set in $\N$, there is at most one strongest solution.\footnote{Note that if we allow solutions in $\R$, then flow propositions with strict inequality do not have unique strongest solutions. In the case of solutions in $\N$, however, the flow propositions $< \gamma$ and $> \gamma$ are equivalent to $\leq \gamma-1$ and $\geq \gamma+1$, respectively.} Hence the following lemma.

\begin{lem}%
\label{lemma value query}
Consider a network $N$. Let $\gamma$ be the strongest solution for a value \bfls query $\psi$. If $\psi$ is a lower-bound query, then the set of solutions is $\N \cap \ropen{\gamma,\infty}$. If $\psi$ is an upper-bound query then the set of solutions is $\N \cap [0,\gamma]$.
\end{lem}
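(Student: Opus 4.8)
The plan is to derive the lemma from a single monotonicity property of value queries together with the well-ordering of $\N$, mirroring the route from Lemma~\ref{lemma mono} to Corollary~\ref{lemma prop fl query} in the propositional case. The core step I would establish first is the value-query analogue of Lemma~\ref{lemma mono}: in a value query $\psi$ the unique flow proposition carrying $?$ behaves as a monotone atom, so shifting its numeric threshold in the direction prescribed by the polarity of $?$ can only preserve satisfaction. Concretely, I would prove by induction on the structure of $\psi$ --- the very induction used for Lemma~\ref{lemma mono}, now tracking the polarity of the flow proposition rather than of a propositional atom --- that when $\psi$ is a lower-bound query, $N \models \psi[? \leftarrow \gamma]$ implies $N \models \psi[? \leftarrow \gamma']$ for every $\gamma' \ge \gamma$, and dually that when $\psi$ is an upper-bound query, $N \models \psi[? \leftarrow \gamma]$ implies $N \models \psi[? \leftarrow \gamma']$ for every $\gamma' \le \gamma$. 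The base case rests on the normalization recorded in the footnote preceding the lemma: over $\N$ the strict bounds $<\gamma$ and $>\gamma$ coincide with $\le \gamma-1$ and $\ge \gamma+1$, so I may assume the threshold carrying $?$ is non-strict, and the monotone behaviour of $\ge ?$ and $\le ?$ under the polarity fixed by the definitions of lower- and upper-bound queries then propagates through $\neg,\vee,X,U,A,\Af$ by the usual polarity bookkeeping.

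Given this monotonicity, the set of solutions is an up-set of $\N$ when $\psi$ is a lower-bound query and a down-set of $\N$ when $\psi$ is an upper-bound query. I would then invoke the well-ordering of $\N$. A nonempty up-set of $\N$ has a unique least element, while a down-set of $\N$ is an initial segment; since the flow through any vertex is bounded by the total edge capacity of the network, the solution set is finite and hence, when it is a down-set, has a unique greatest element. In either case this extremal element is exactly the \emph{strongest} solution $\gamma$ in the sense defined above --- the boundary below which (for a lower-bound query) or above which (for an upper-bound query) no value is a solution --- and the non-strict normalization from the first step guarantees that the boundary value $\gamma$ is itself attained, so $\gamma$ is a genuine solution rather than an open endpoint. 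Reading off the two cases yields the claim: for a lower-bound query the solution set is $\N \cap \ropen{\gamma,\infty}$, and for an upper-bound query it is $\N \cap [0,\gamma]$.

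The step I expect to be the main obstacle is the monotonicity induction itself, for the reason that distinguishes value queries from the purely propositional setting of Lemma~\ref{lemma mono}: the flow proposition carrying $?$ is interpreted against a flow function, and under a flow quantifier $\Af$ or $\Ef$ the same placeholder is evaluated across \emph{all} flow functions at once. The delicate point is therefore to check that loosening the threshold commutes with the flow quantifiers --- that a witnessing (respectively, falsifying) flow function for one threshold remains witnessing (respectively, falsifying) for a looser threshold --- so that the polarity argument passes uniformly through arbitrarily nested flow quantifiers. This is precisely the feature that makes value queries linearly ordered, and hence equipped with at most one strongest solution, in contrast to the partially ordered strongest solutions of propositional queries.
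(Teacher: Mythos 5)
The induction you propose fails at its base case, and the failure is decisive. For a lower-bound query, the placeholder's flow proposition sits (after the polarity bookkeeping) in a positive position as $(\geq ?)$; but $(\geq \gamma)$ implies $(\geq \gamma')$ exactly when $\gamma' \leq \gamma$, so the same argument as in Lemma~\ref{lemma mono} yields that satisfaction persists as the threshold \emph{decreases}: $N \models \psi[? \leftarrow \gamma]$ implies $N \models \psi[? \leftarrow \gamma']$ for all $\gamma' \leq \gamma$. Solution sets of lower-bound queries are therefore down-sets of $\N$, and dually up-sets for upper-bound queries --- the opposite of the monotonicity you assert, which already fails for the atom itself ($f(v) \geq \gamma$ does not give $f(v) \geq \gamma'$ for $\gamma' > \gamma$). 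The simplest counterexample is the lower-bound query $\Ef (\geq ?)$, whose solutions are $\{0,1,\ldots,M\}$ for $M$ the maximal flow: downward closed, not upward closed. Your reading of ``strongest'' is inverted accordingly: the paper defines $\gamma$ to be strongest for a lower-bound query when no solution \emph{exceeds} it, i.e., it is the maximum, whereas you take it to be the least element of an up-set; under your claimed up-set structure a lower-bound query could never have a strongest solution at all. The finiteness appeal (``the solution set is finite since flow is bounded by total capacity'') is also both false in general --- up-sets are infinite --- and unnecessary: a downward-closed proper subset of $\N$ automatically has a maximum, and the lemma presupposes that a strongest solution exists. By contrast, the step you flagged as the main obstacle, pushing the implication through $\Af$ and $\Ef$, is harmless: the implication is proved pointwise for every fixed vertex and flow function, so path and flow quantifiers of either polarity preserve it exactly as in Lemma~\ref{lemma mono}.

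In fairness, what misled you is in the paper itself: the printed lemma transposes its two cases and is inconsistent with the paper's own definition of strongest --- if a lower-bound query's solutions were $\N \cap \ropen{\gamma,\infty}$, there would be solutions greater than $\gamma$, contradicting strongestness, and the clause ``no solution $\gamma' > \gamma$'' only matches the maximum of a down-set. The consistent statement is: for a lower-bound query the set of solutions is $\N \cap [0,\gamma]$, and for an upper-bound query it is $\N \cap \ropen{\gamma,\infty}$. Note also that the paper supplies essentially no proof: the lemma is asserted in one sentence from the observation that a strongest solution is a maximum or minimum of a subset of $\N$. The intended argument is precisely the skeleton you set up --- a value analogue of Lemma~\ref{lemma mono}, with strict thresholds normalized away via the footnote, followed by the order structure of $\N$ --- just with both monotonicity directions reversed; with that flip, your outline becomes a correct and complete proof of the corrected statement.
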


\begin{rem}%
\label{no =}
Note that in our definition of a value query, we do not allow a flow proposition of the form $=?$. 
Indeed, such a flow proposition encodes a conjunction of two flow propositions of different polarities.
If we decide to allow such value queries then the set of solutions does not satisfy a property as in Lemma~\ref{lemma value query}. For example, the set of solutions for the query $\Ef[ (=?) \wedge ((\geq 2 \, \wedge \leq 4) \vee (\geq 6 \, \wedge \leq 9)) ]$ and a network with a maximal flow
of $9$, is $\{2,3,4,6,7,8,9\}$.
\end{rem}

Thus, unlike the case of propositional queries, in the case of value queries we can talk about the strongest solution, and study the complexity of finding it. The lower bound in the following theorem corresponds to the problem of deciding whether there is a solution to a given value \bfls query.

\begin{thm}
The problem of finding the strongest solution to a value \bfls query is PSPACE-complete. Furthermore, the network complexity is in $\Delta_2^P$.
\end{thm}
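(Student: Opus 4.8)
The plan is to exploit the monotonicity established in Lemma~\ref{lemma value query}: the set of solutions to a value query is an interval of $\N$ that is either downward- or upward-closed, so the strongest solution is precisely the single threshold separating solutions from non-solutions. This reduces the search to locating that threshold by binary search, using \bfls model checking as the decision oracle for the predicate ``$\gamma$ is a solution''. The whole difficulty is then to keep the number of oracle calls polynomial, which is where the range bound and the monotonicity combine.

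For the upper bound, I would first bound the relevant range of values exactly as in Lemma~\ref{finite forall}: letting $C_N = 1 + \sum_{e \in E} c(e)$, every flow satisfies $f(v) < C_N$ for all $v$, so for $\gamma \geq C_N$ the flow proposition carrying $?$ is already trivially true (if it is of the form $< \gamma$ or $\leq \gamma$) or trivially false (if it is $> \gamma$ or $\geq \gamma$). Hence the truth value of $\psi[? \leftarrow \gamma]$ is constant for $\gamma \geq C_N$, and the strongest solution, when it exists, lies in $\{0,\ldots,C_N\}$. By Lemma~\ref{lemma value query} the membership predicate is monotone on this range, so, after reading off the query type from its polarity, a binary search over $\{0,\ldots,C_N\}$ locates the threshold using $O(\log C_N)$ evaluations of the form $N \models \psi[? \leftarrow \gamma]$. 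Each $\gamma$ has $O(\log C_N)$ bits, so $\psi[? \leftarrow \gamma]$ is of polynomial size, and each evaluation is in PSPACE by Theorem~\ref{pspace mc}; since PSPACE is closed under polynomially many sequential calls, the whole procedure is in PSPACE. If even the weakest value fails to be a solution, the procedure reports that no solution exists.

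For the lower bound it suffices, as noted just before the theorem, to show that deciding whether a value query has any solution is PSPACE-hard, since a procedure that finds the strongest solution also decides existence. I would reduce from \bfls model checking, which is PSPACE-complete by Theorem~\ref{pspace mc}: given a network $N$ and a closed \bfls formula $\varphi$, take the value query $\psi = \varphi \wedge \Ef(\geq ?)$. This is a legal value query, since its only occurrence of $?$ sits inside a flow proposition in the scope of $\Ef$, and $\psi[? \leftarrow \gamma]$ is closed for every $\gamma$. Now $\psi[? \leftarrow 0] \equiv \varphi \wedge \Ef(\geq 0) \equiv \varphi$, because the zero flow witnesses $\Ef(\geq 0)$; conversely, any solution forces the conjunct $\varphi$ to hold. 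Hence $\psi$ has a solution in $N$ iff $N \models \varphi$, giving PSPACE-hardness.

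For the network complexity, fixing $\psi$ leaves $\psi[? \leftarrow \gamma]$ of fixed shape, with only the numeric constant $\gamma$ (of $O(\log C_N)$ bits) depending on the network, and $\log C_N$ is polynomial in the size of $N$. Each membership test then runs in $\Delta_2^P = {\rm P}^{\rm NP}$ in the network by Theorem~\ref{Delta2 upper bound}, and the binary search performs only polynomially many such tests. Since a polynomial-time machine making polynomially many adaptive calls to a ${\rm P}^{\rm NP}$ subroutine remains in ${\rm P}^{\rm NP}$, the network complexity is in $\Delta_2^P$. The main point requiring care throughout is the bookkeeping that keeps the number of model-checking calls polynomial in the input; this is exactly what the range bound $C_N$ together with the monotonicity of Lemma~\ref{lemma value query} buys us, turning a potentially exponential enumeration into a logarithmic search.
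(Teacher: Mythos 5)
Your proof is correct and follows essentially the same route as the paper's: binary search over the (capacity-bounded) value range using \bfls model checking as the decision oracle for the upper bounds, and a reduction from \bfls model checking via a trivially-solvable value-query conjunct for the lower bound (you use $\varphi \wedge \Ef(\geq ?)$ where the paper uses $\varphi \wedge \Af(\leq ?)$, an immaterial difference). You are merely more explicit than the paper about the range bound $C_N$ and the closure of PSPACE and ${\rm P}^{\rm NP}$ under polynomially many adaptive oracle calls.
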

\begin{proof}
Model checking of a \bfls formula $\psi$ can be reduced to deciding whether there is a solution for the value query $\psi \wedge (\Af \leq ?)$. Hence the PSPACE-hardness.

Lemma~\ref{lemma value query} implies that in order to find the strongest solution for a value query $\psi$, we can run a binary search on the values, and for each value $\gamma$ to check whether $\psi[? \leftarrow \gamma]$ holds using \bfls model-checking. Thus, the number of calls to the \bfls model checker is logarithmic in $\sum_{e \in E}c(e)$, and hence polynomial in $|N|$. The PSPACE upper bound then follows from the PSPACE upper bound for \bfls model checking. The $\Delta_2^P$ upper bound for the network complexity follows from the $\Delta_2^P$ upper bound for the network complexity of \bfls model checking.
\end{proof}

\section{Discussion}

We introduced the flow logic \bfls and studied its theoretical and practical aspects, as well as extensions and fragments of it. Below we discuss possible directions for future research.

At the more theoretical front, an important aspect that we left open in this work is the \emph{expressive power} of the different extensions and fragment of \bfls. As discussed in Remark~\ref{remark ep}, some cases follow easily from known expressiveness results for \ctls, yet the full picture is open. Also, as has been the case in traditional temporal logics, questions of \emph{succinctness} are  of interest too.

Another natural problem regarding \bfls is the \emph{satisfiability} problem. That is, given a \bfls formula, decide whether there is a flow network that satisfies it. In Section~\ref{unwinding}, we showed that, unlike \ctls, the logic \bfls is sensitive to unwinding. Thus, the standard algorithm for the satisfiability problem of \ctls, which is based on checking emptiness of tree automata, is not useful in the case of \bfls. Moreover, the satisfiability problem is challenging already to the linear fragment LFL of \bfls. Indeed, as discussed in Section~\ref{frags}, the
 semantics of LFL mixes linear and branching semantics, and there is no simple reduction of the satisfiability problem for LFL into the emptiness of word automata.

In the algorithmic side, the relation between maximum flows and minimum cuts has been significant in the context of the traditional maximum-flow problem. It would be interesting to study the relation between cuts and \bfls. In particular, the notion of cuts that satisfy some structural property that can be expressed with \bfls seems interesting.

Finally, many problems in various domains are solved using a reduction to the maximum-flow problem~\cite{CLR90,AMO93}. Flow logics allow reasoning about properties of flow networks that go beyond their maximal flow. An interesting direction for future research is to study applications of flow logics for rich variants of such problems. For example, using flow logics we can solve variants of matching or scheduling problems that involve restrictions on the allowed matches or variants of schedules.

\bibliographystyle{alpha}
\bibliography{ok}

\end{document}